\documentclass[a4paper,fleqn]{cas-sc}

\usepackage[authoryear,longnamesfirst]{natbib}
\usepackage{float}
\usepackage{xr}
\usepackage{array,bm,amsmath,amssymb,epsfig,epstopdf}
\usepackage{amsthm}
\usepackage{setspace}
\usepackage{CJK}
\usepackage{color}
\usepackage{lscape}
\usepackage{verbatim}
\usepackage{enumerate}
\usepackage{booktabs}
\usepackage{threeparttable}
\usepackage{multicol}
\usepackage{longtable}
\usepackage{multirow}
\usepackage{bm}
\usepackage{bbm}
\usepackage{cases}
\usepackage{caption}
\usepackage{natbib}
\usepackage{algorithm}
\usepackage{algorithmic}
\usepackage{geometry}
\usepackage{framed}
\usepackage{ragged2e}
\usepackage[figuresright]{rotating}
\usepackage{mathtools}
\usepackage{pifont}
\usepackage{indentfirst}  % 首行缩进的命令
\usepackage{listings}
\definecolor{codegreen}{rgb}{0,0.6,0}
\definecolor{codegray}{rgb}{0.5,0.5,0.5}
\definecolor{codepurple}{rgb}{0.58,0,0.82}
\definecolor{backcolour}{rgb}{0.95,0.95,0.92}
\lstdefinestyle{mystyle}{
backgroundcolor=\color{backcolour},   
commentstyle=\color{codegreen},
keywordstyle=\color{magenta},
numberstyle=\tiny\color{codegray},
stringstyle=\color{codepurple},
basicstyle=\ttfamily\footnotesize,
breakatwhitespace=false,         
breaklines=true,                 
captionpos=b,                    
keepspaces=true,                 
numbers=left,                    
numbersep=5pt,                  
showspaces=false,                
showstringspaces=false,
showtabs=false,                  
tabsize=2
}

\newcommand\RandomVariable[1]{\mathit{1}}

\newcommand\RandomVector[1]{\bm{1}}

\newcommand\Vector[1]{\bm{1}}

\newcommand\MATRIX[1]{\bm{1}}

\newcommand\gN{{\mathcal{N}}}

\newcommand\gS{{\mathcal{S}}}

\newcommand\vecn{\mathrm{vec}}

\DeclareMathOperator*{\argmin}{arg\,min}

\DeclareMathOperator{\Tr}{Tr}

\newtheorem{theorem}{Theorem}[section]
\newtheorem{lemma}{Lemma}[section]

\newtheorem{assumption}{Assumption}[section]
\newcommand{\tabincell}[2]{\begin{tabular}{@{}1@{}}2\end{tabular}}

\def\tsc#1{\csdef{#1}{\textsc{\lowercase{#1}}\xspace}}
\tsc{WGM}
\tsc{QE}
\tsc{EP}
\tsc{PMS}
\tsc{BEC}
\tsc{DE}
\let\printorcid\relax
\begin{document}
\let\WriteBookmarks\relax
\def\floatpagepagefraction{1}
\def\textpagefraction{.001}

\shorttitle{Estimating Precision Matrices for High-Dimensional Interval-Valued Data} 
\shortauthors{Z. Qin et~al.} 

%\begin{frontmatter}

\title [mode = title]{Estimating Precision Matrices for High-Dimensional Interval-Valued Data}                  

\author[1,2]{Zhongfeng Qin}
\ead{qin@buaa.edu.cn}
\author[1]{Hao Xu}
\ead{haoxu124@buaa.edu.cn}
\author[1]{Wenhao Cui}
\ead{wenhaocui1992@buaa.edu.cn}
\author[3,4]{Wan Tian}
\ead{wantian61@foxmail.com}
\cormark[1] 
\affiliation[1]{organization={School of Economics and Management, Beihang University},
                city={Beijing},
                postcode={100191},
                country={China}}
\affiliation[2]{organization={Key Laboratory of Complex System Analysis, Management and Decision (Beihang University)},
                city={Beijing},
                postcode={100191},
                country={China}}
\affiliation[3]{Advanced Institute of Information Technology, Peking University}    

\affiliation[4]{organization={Wangxuan Institute of Computer Technology, Peking University},
                city={Beijing},
                postcode={100871},
                country={China}}       
\cortext[cor1]{Corresponding author} 

\begin{abstract}
In the field of statistical learning and data analysis, estimating precision matrices (i.e., the inverse of covariance matrices) is a critical task, particularly for understanding dependency structures among variables. However, traditional methods often fall short when dealing with high-dimensional interval-valued data, where each observation is represented as an interval rather than a single point. This paper proposes a novel framework for estimating precision matrices in such contexts, addressing the unique challenges posed by the interval nature of the data. Specifically, we assume that the upper and lower bounds of the intervals share the same conditional dependency structure, and then formulate the interval graphical lasso optimization objective to estimate the precision matrix. At the optimization level, we provide an efficient computational approach, while at the theoretical level, we prove the sparsity and consistency of the estimator. Experimental results on simulated studies and real data applications demonstrate the superiority of the proposed method in terms of estimation precision and interpretability.
\end{abstract}

% Research highlights
% \begin{highlights}
% \item 
% \item 
% \item 
% \end{highlights}

\begin{keywords}
Precision matrix \sep Interval-valued data\sep Regularization\sep  High-dimensional \sep Graphical lasso
\end{keywords}
\maketitle

\section{Introduction} \label{sec1}
The modeling and inference of interval-valued data, particularly in the context of interval-valued time series, have garnered significant attention in the fields of statistics and econometrics \citep{billard2003statistics, gonzalez2013constrained,  golan2015interval, han2016vector, sun2022model}. Interval-valued data are prevalent in practical applications. For instance, in meteorological science, daily temperature ranges, defined by the maximum and minimum temperatures, constitute interval observations. Other notable examples include the annual GDP growth rate range of a country, represented by its maximum and minimum values, and the blood pressure interval, which encompasses both diastolic and systolic measurements. Compared to point-valued data, interval-valued data offer two key advantages \citep{han2016vector}. First, interval observations inherently capture both level and variability, providing richer information. This enhanced informativeness enables more efficient estimation and more powerful statistical inference. Second, from a robustness standpoint, specific perturbations can significantly distort the statistical inference of point-valued data, whereas interval-valued data modeling mitigates this issue, offering greater resilience to such disturbances \citep{chou2005forecasting}.

% 当前，关于区间值数据的研究已涵盖广泛领域，包括但不限于回归分析、区间值时间序列分析、可视化及模型平均。然而，若要深入挖掘多元区间变量间潜在的条件依赖结构，核心在于对精度矩阵的有效估计。在统计学习与数据分析任务中，精度矩阵不仅是构建高斯图模型以揭示变量间条件独立性的数学基础，更是构建线性判别分析（LDA）分类器的关键参数。此外，该矩阵在基因调控网络推断及金融领域的最小方差投资组合构建中也发挥着不可替代的作用。尽管精度矩阵估计在经典点值数据分析中已被证实其重要价值，但针对区间值数据的精度矩阵研究仍极为有限，且缺乏统一的理论框架。下文将分别探讨该领域在低维与高维场景中的挑战与进展。
Current research on interval-valued data encompasses a wide range of topics, including but not limited to regression analysis \citep{billard2000regression, gonzalez2013constrained, lima2017nonlinear, sun2016linear}, interval-valued time series analysis \citep{maia2008forecasting, maia2011holt, sun2018threshold}, visualization \citep{le2012symbolic, zhang2022visualization}, and model averaging \citep{sun2022model}. However, to further explore the latent conditional dependence structures among multivariate interval-valued variables, the central task lies in the effective estimation of the precision matrix. In statistical learning and data analysis, the precision matrix not only serves as the mathematical foundation for constructing Gaussian graphical models to reveal conditional independence relationships among variables, but also plays a central role in the construction of linear discriminant analysis (LDA) decision rules. Moreover, it plays an indispensable role in gene regulatory network inference and in the construction of minimum-variance portfolios in financial applications\citep{schafer2005empirical,friedman2008sparse,kan2024sample}. Although the importance of precision matrix estimation has been well established in the analysis of point-valued data, research on precision matrices for interval-valued data remains extremely limited and lacks a unified theoretical framework.
% As we know, the effective estimation of covariance and precision matrices is a critical step in numerous statistical learning and data analysis tasks.However, research on covariance and precision matrices for interval-valued data remains extremely limited and lacks a unified framework.
Below, we discuss the challenges and advancements in both low-dimensional and high-dimensional settings.

In low-dimensional settings, \citet{cazes1997extension} proposed principal component analysis methods based on vertices and centers for interval-valued data, simultaneously introducing two covariance matrix estimators. Nevertheless, both estimators utilize only a portion of the information contained within the interval observations. \citet{wang2012cipca} assumed an infinite number of uniformly distributed points within the hypercube and defined the inner product operator for interval-valued variables, as well as the corresponding covariance. \citet{bertrand2000descriptive} defined the empirical distribution function and density function for interval-valued data, laying the groundwork for the empirical mean and variance. Building on this work, \citet{billard2008sample} further extended the definition to the covariance of bivariate interval-valued variables. To enhance the robustness of the estimator, \citet{tian2024minimum} extended the minimum covariance determinant (MCD) estimator \citep{rousseeuw1985multivariate} to the interval-valued data scenario, using the covariance matrix defined by \citet{billard2008sample} as a baseline. Additionally, they proposed a computationally efficient algorithm for implementation. On the theoretical front, they also established the high breakdown point property of the proposed estimator, demonstrating its robustness against outliers and data contamination.
In high-dimensional settings, to the best of our knowledge, only two studies have addressed matrix estimation for interval-valued data. The first focuses on covariance matrix estimation, where \citet{tian2024minimum}, within the framework of the MCD, proposed regularized and projection-based methods for estimating high-dimensional covariance matrices. The second study pertains to precision matrix estimation, where \citet{wu2024idgm} employed a bivariate point-value representation of intervals and utilized graphical lasso \citep{friedman2008sparse} to compute the precision matrix of the bivariate points as a measure of conditional dependency for interval-valued variables. However, this approach suffers from a critical limitation: it doubles the dimensionality of the matrix, resulting in reduced interpretability and counterintuitive outcomes. Therefore, it is both urgent and essential to develop a precision matrix estimation method that aligns with the dimensionality of interval-valued variables.

In contrast to interval-valued data, the theory and methodologies for matrix estimation of point-valued data have been progressively refined and well-established. In low-dimensional settings, most research has focused on robust estimation of covariance matrices. One of the most widely used multivariate high-breakdown-point methods is the Minimum Volume Ellipsoid (MVE) estimator \citep{van2009minimum}, which aims to find the ellipsoid of minimal volume that encompasses at least half of the observations. \citet{rousseeuw1985multivariate} introduced a highly robust covariance matrix estimation method known as the MCD estimator, whose core principle is to select a subset of \(h\) observations out of \(n\) to estimate the covariance matrix with the smallest determinant. Theoretically, MCD exhibits superior properties compared to MVE; however, due to its computational complexity, it was not widely adopted until \citet{rousseeuw1999fast} proposed the Fast-MCD algorithm. In high-dimensional settings, most studies rely on regularization techniques and structural assumptions about the target matrix to achieve consistent estimation. These widely adopted assumptions encompass sparsity \citep{fan2016overview}, banding structures \citep{wu2003nonparametric, Bickel2008RegularizedEO}, and tapering \citep{Furrer2007EstimationOH, Cai2010OptimalRO}. By combining a predefined symmetric positive definite target matrix with the covariance matrix of a standardized subset of \(h\) observations, \citet{boudt2020minimum} extended MCD to high-dimensional scenarios. In the realm of high-dimensional precision matrix estimation, graphical lasso plays a pivotal role due to its ability to efficiently recover sparse dependency structures, making it a cornerstone method for large-scale statistical inference. For a comprehensive discussion on matrix estimation methods for high-dimensional point-valued data, readers are referred to \cite{fan2016overview} and the references included therein.

% Building upon the precision matrix estimation methods for high-dimensional point-valued data, we propose a novel framework specifically tailored for interval-valued data.	Specifically, we assume that the upper and lower bounds of interval-valued variables share the same dependency structure, i.e., the same precision matrix and covariance matrix. Consequently, we derive the joint likelihood of the upper and lower bounds and incorporate a sparsity penalty to formulate an optimization problem for estimating the precision matrix. This results in an optimization objective similar to graphical lasso, which we term interval graphical lasso (IGL). On the optimization front, we provide an efficient computational algorithm; on the theoretical front, under mild conditions, we establish the sparsity and consistency of the estimator. Through examining different dependency structures in simulation studies, as well as real-world applications in portfolio construction, we demonstrate the effectiveness and interpretability of the proposed estimator.

Building upon the precision matrix estimation methods for high-dimensional point-valued data, we propose a novel framework specifically tailored for interval-valued data.	In many practical applications, such as financial markets, the upper and lower bounds of interval data essentially characterize the fluctuation boundaries of a variable over a specific observation period. Taking stock prices as an example, the interdependence among different assets primarily stems from their response to common market shocks and macroeconomic information. The daily high and low prices of stocks often exhibit consistent conditional dependencies. Based on this, we assume that the upper and lower bounds of interval-valued variables share the same dependency structure, i.e., the same precision matrix and covariance matrix. Consequently, we derive the joint likelihood of the upper and lower bounds and incorporate a sparsity penalty to formulate an optimization problem for estimating the precision matrix. This results in an optimization objective similar to graphical lasso, which we term interval graphical lasso (IGL). We analyze IGL from both algorithmic and theoretical perspectives. In terms of computational efficiency, we propose an efficient block coordinate descent framework for solving the objective function. Theoretically, under mild assumptions, we prove the sparsity and consistency of the estimator. This implies that IGL can effectively identify zero elements in the precision matrix and recover the true dependency structure of the interval variables.
Through examining different dependency structures in simulation studies, as well as developing investment strategies with higher Sharpe ratios in an empirical analysis of S\&P 500 component portfolios, we have comprehensively validated the superiority and practical value of the IGL method.

The structure of this paper is organized as follows: Section \ref{sec2} introduces the proposed IGL and provides the corresponding optimization algorithm. Section \ref{sec3} discusses the theoretical properties of the estimator. Section \ref{sec4} presents simulation studies,  which primarily analyzes the precision matrices of different sparse structures and their estimation accuracy. Section \ref{sec5} applies the proposed estimator to real-world applications in portfolio construction to illustrate its interpretability advantages. Section \ref{sec6} concludes the paper. All theoretical proofs are provided in the Appendix \ref{appendixA}.

\section{Methodology}\label{sec2}
In this section, we present the proposed IGL and provide an efficient algorithm for its solution. 

Let \(X = (X_{(1)}, X_{(2)}, \cdots, X_{(p)}) = (X_1,X_2,\cdots, X_n)^\top\) be the interval-valued observation matrix, where \(X_{(j)} = (X_{1j}, X_{2j}, \cdots, X_{nj})^\top\) represents the \(n\) observations of the \(j\)-th interval-valued variable, and \(X_i = (X_{i1}, X_{i2}, \cdots, X_{ip})^\top\) denotes the \(i\)-th observation of the \(p\) interval-valued variables. Each interval-valued observation unit \(X_{ij} = [X^{l}_{ij}, X^{u}_{ij}]\), where \(X^{l}_{ij}\) and \(X^{u}_{ij}\) are the upper and lower bounds of the interval, respectively. Our current objective is to estimate the precision matrix corresponding to the \( p \) interval-valued variables based on the interval-valued observation matrix \(X\).

In our proposed method, we begin by assuming that the upper and lower bounds of the intervals share the same dependency structure and are independently and identically distributed (i.i.d.) according to a normal distribution. This assumption can be formally expressed as follows:
\[
(X^l_i)^n_{i=1} \overset{i.i.d.}{\sim} \gN(\mu^l, \Sigma), \quad \ 	 (X^u_i)^n_{i=1} \overset{i.i.d.}{\sim} \gN(\mu^u, \Sigma), 
\]
where \(X^l_i = (X^l_{i1}, X^l_{i2}, \cdots, X^l_{ip})^\top\) and \(X^u_i = (X^u_{i1}, X^u_{i2}, \cdots, X^u_{ip})^\top\) represent the upper and lower bounds of the \( i \)-th observation \(X_i\), respectively. This assumption is motivated by two key reasons. First, for interval-valued data, the upper and lower bounds typically exhibit similar trends. For instance, in the context of stock prices, when the daily high price increases, the daily low price also tends to rise. Figure 1 provides an illustrative example of this phenomenon. Second, this assumption allows us to seamlessly extend the theories and methodologies developed for point-valued data to the interval-valued scenario.

\begin{figure}[pos=htbp]
\centering 
\begin{minipage}[t]{0.44\textwidth}
\centering
\includegraphics[scale=0.25]{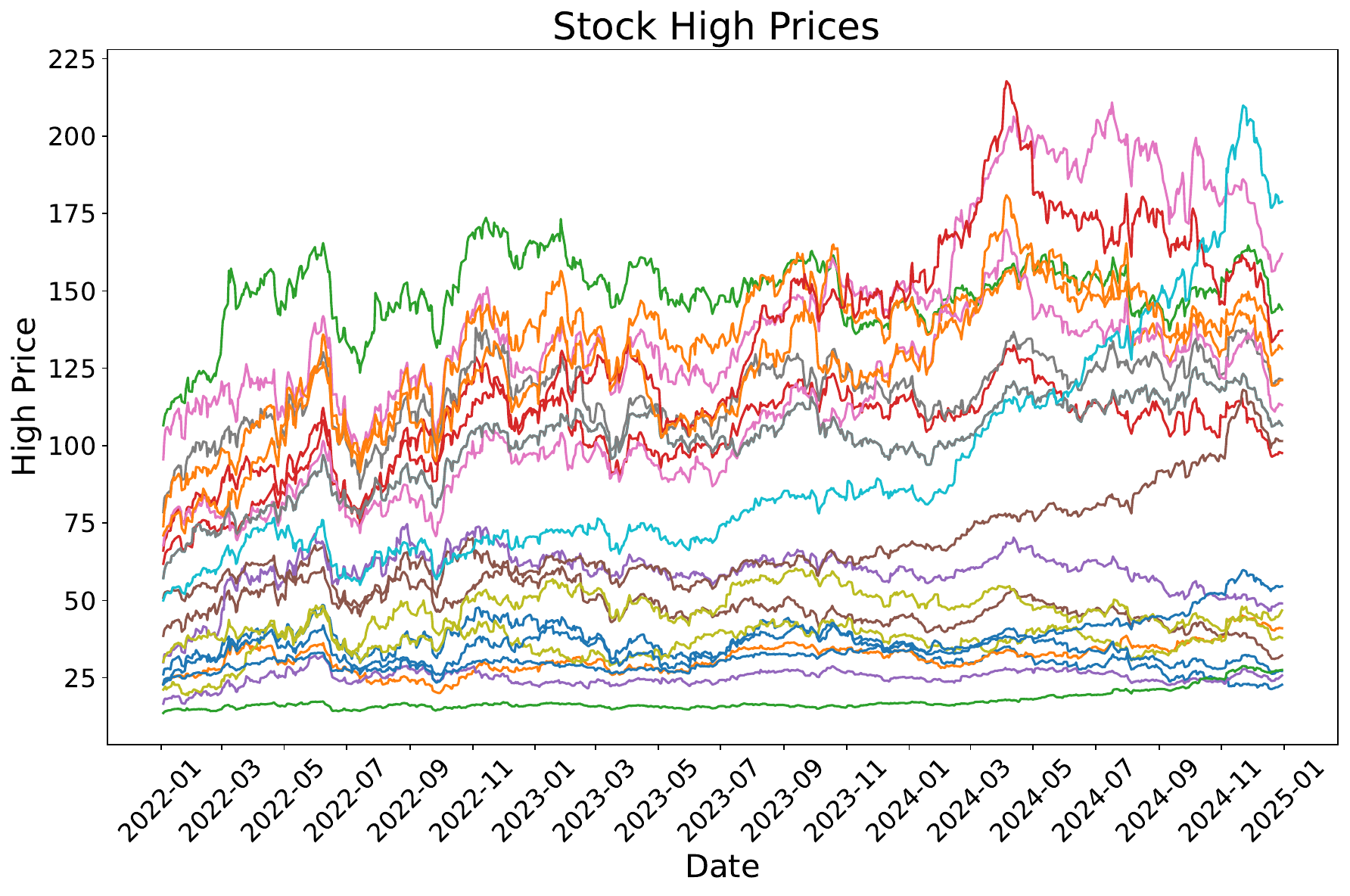}
\end{minipage}
\begin{minipage}[t]{0.27\textwidth}
\centering
\includegraphics[scale=0.23]{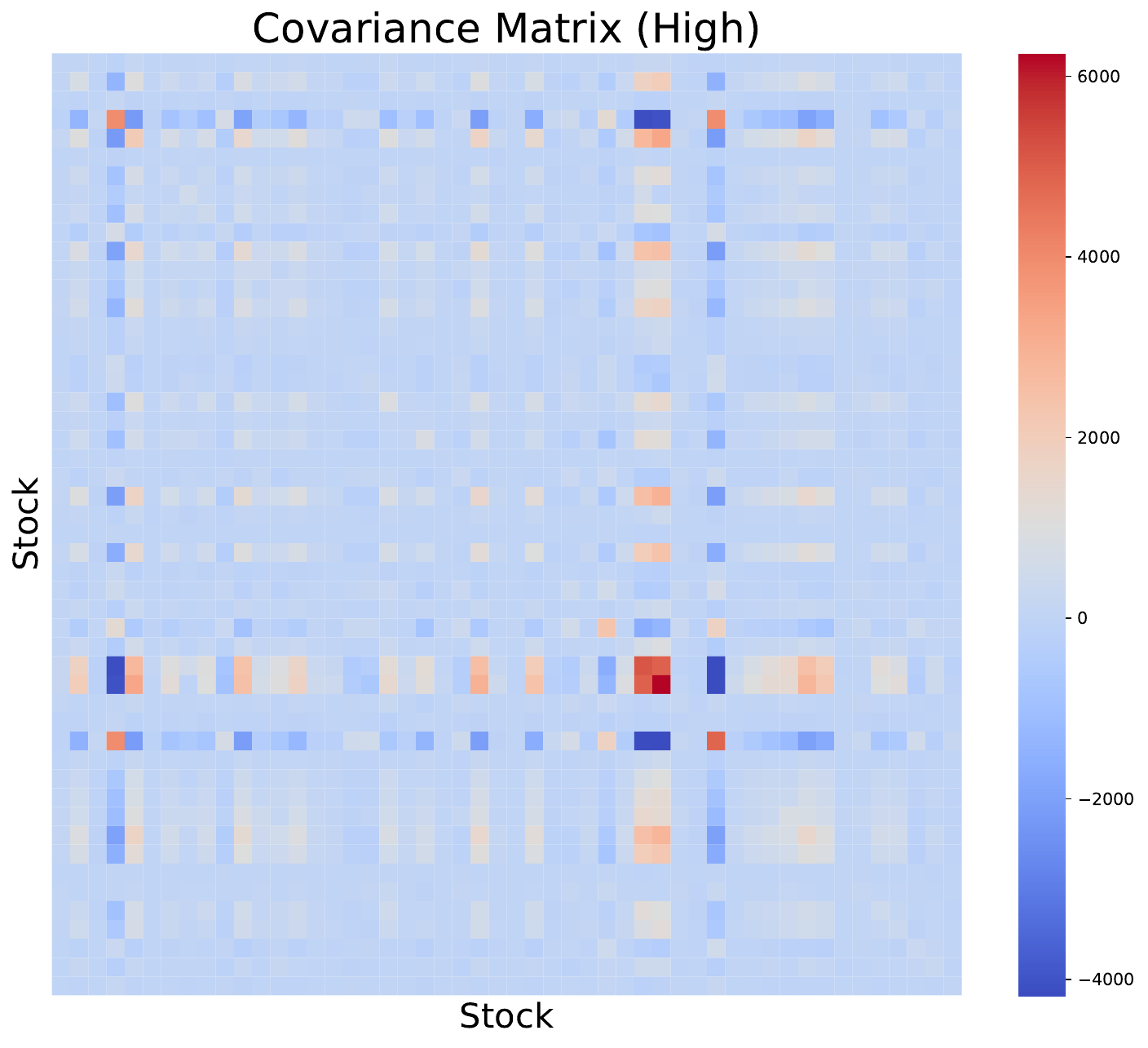}
\end{minipage}
\begin{minipage}[t]{0.27\textwidth}
\centering
\includegraphics[scale=0.23]{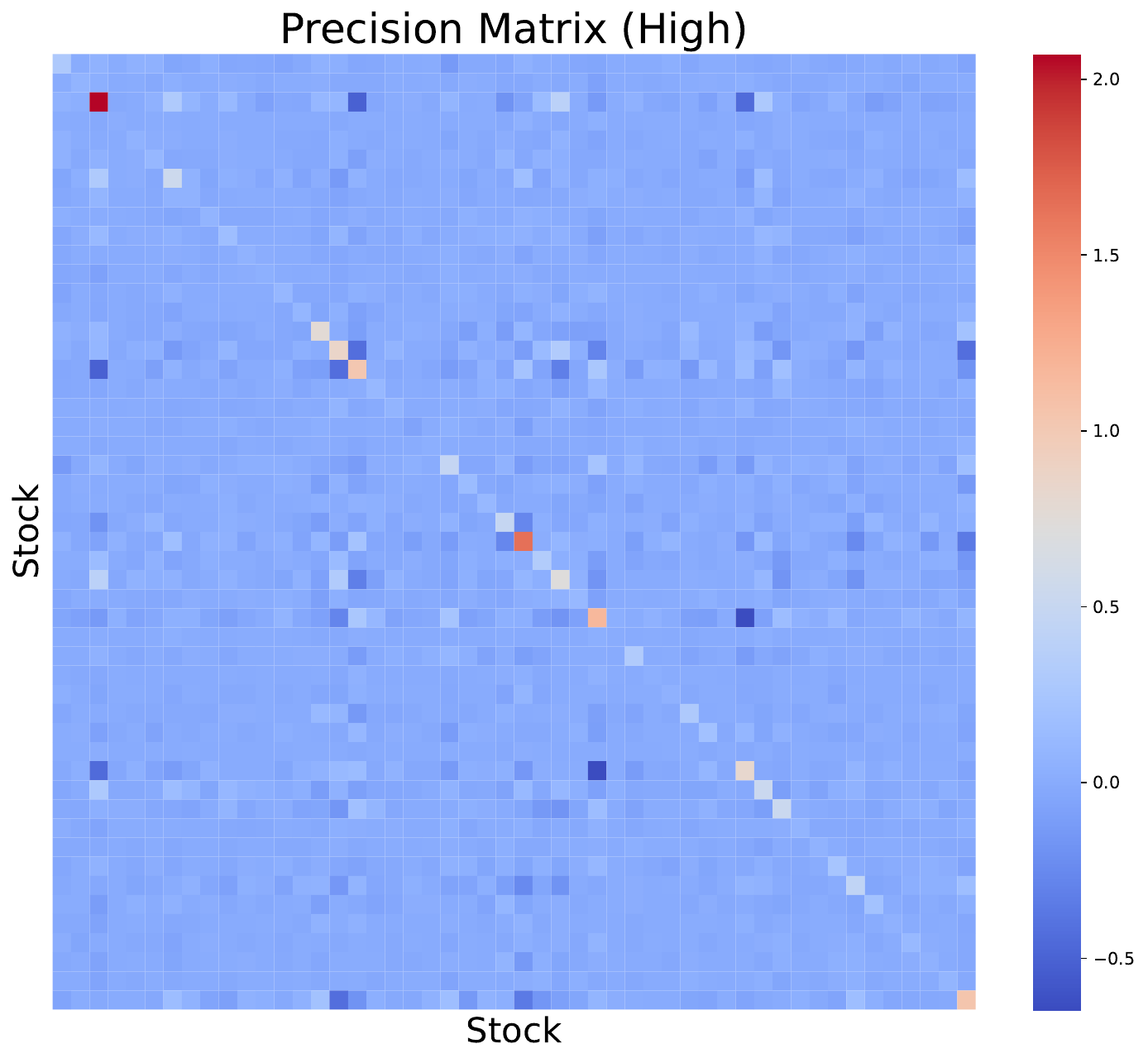} 
\end{minipage}
\begin{minipage}[t]{0.44\textwidth}
\centering
\includegraphics[scale=0.25]{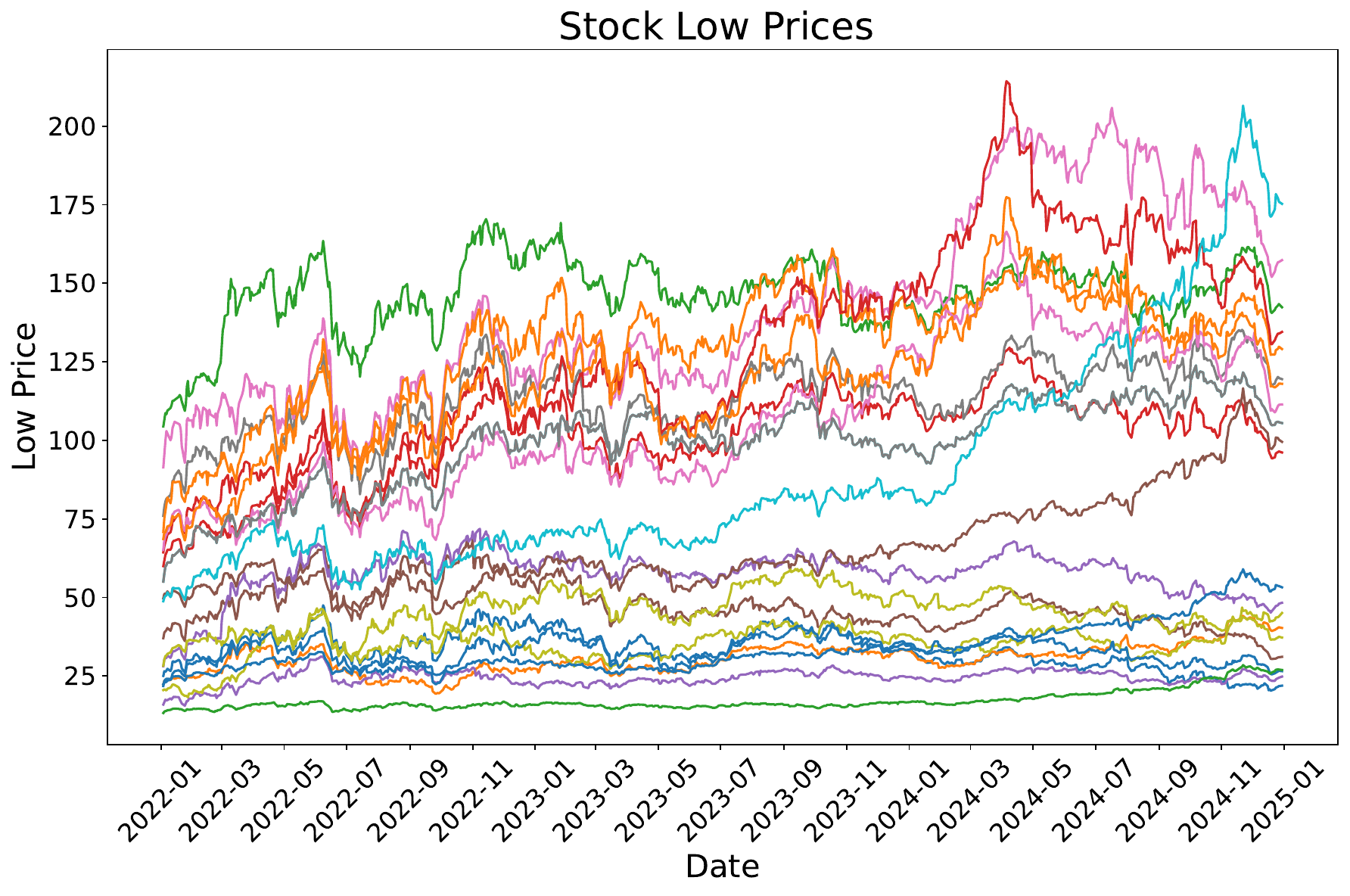}
\end{minipage}
\begin{minipage}[t]{0.27\textwidth}
\centering
\includegraphics[scale=0.23]{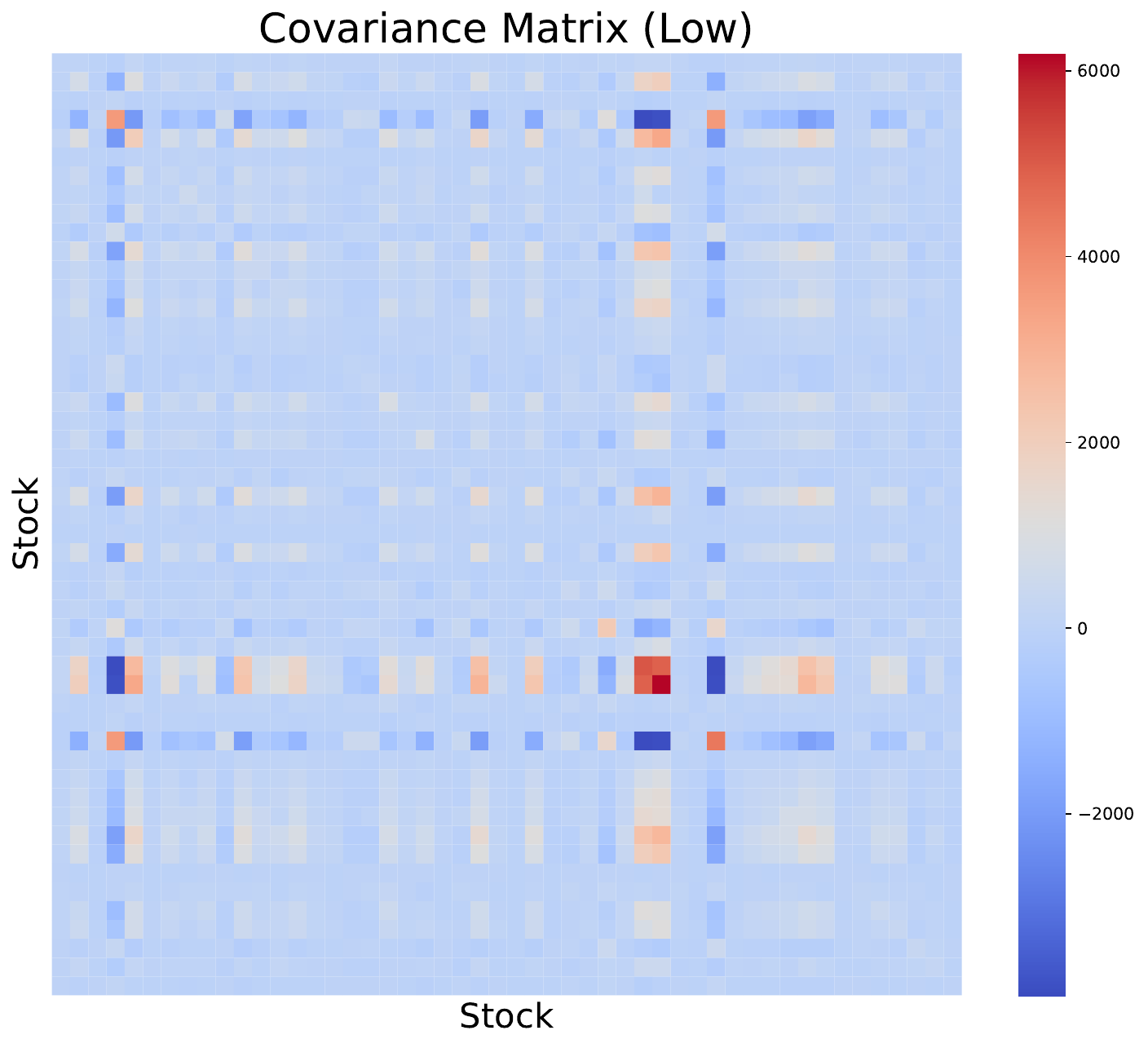}
\end{minipage}
\begin{minipage}[t]{0.27\textwidth}
\centering
\includegraphics[scale=0.23]{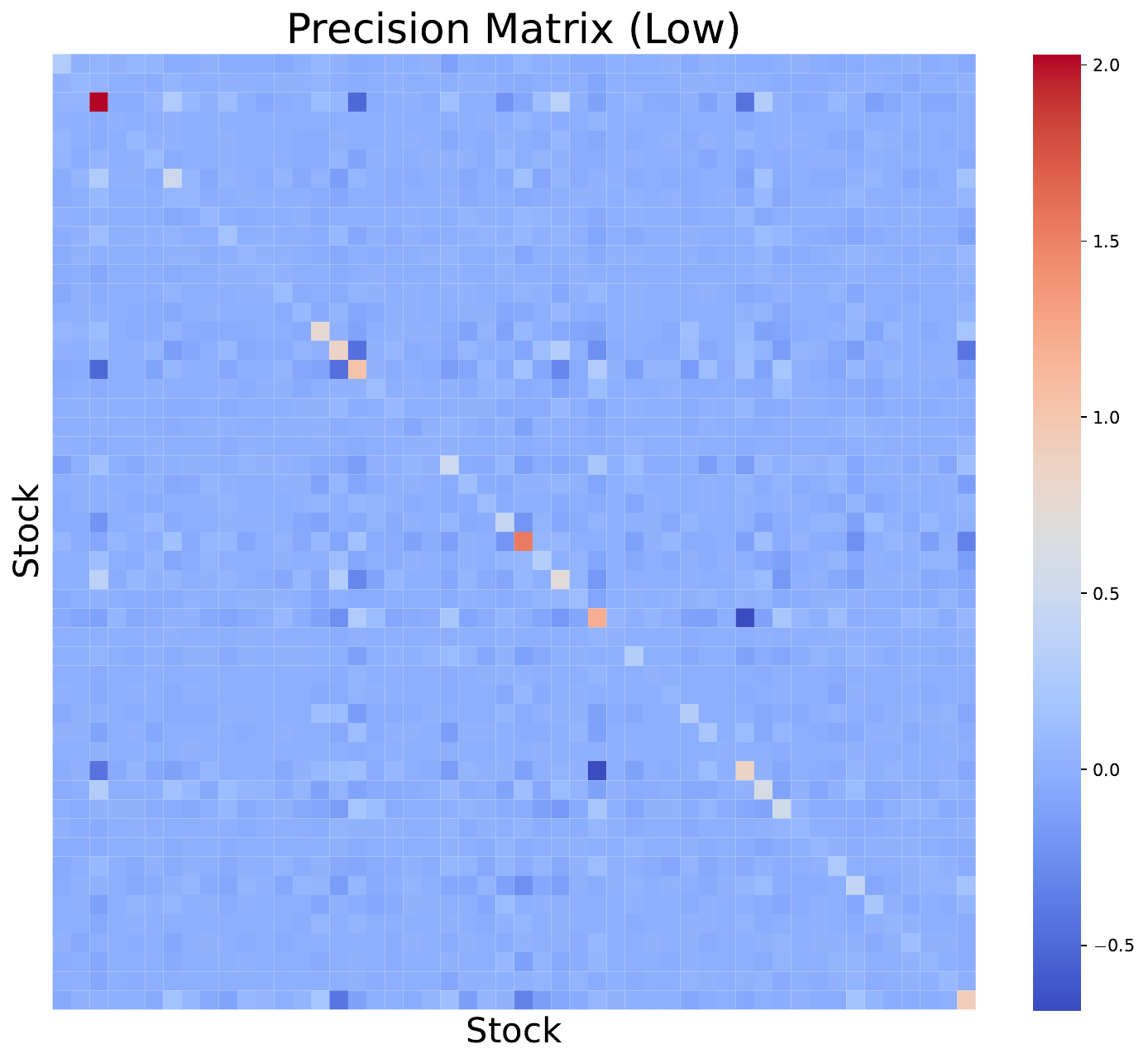} 
\end{minipage}
\caption{We visualized the data of 50 stocks from the Energy sector of the S\&P 500 Index during the period from January 1, 2022, to December 31, 2024, including trend charts, covariance matrix heatmaps, and precision matrix heatmaps. The two rows correspond to the daily highest and lowest prices of the stocks, respectively. The second column demonstrates that estimating the covariance matrix based on the highest and lowest prices yields similar results, while the third column shows that the precision matrix is also analogous. This supports our hypothesis.}\label{spthreesy}
\end{figure}

Under this assumption, we can directly express the negative log-likelihood function for the upper and lower bounds of each observation as follows:
\begin{equation}\label{loglike}
\begin{aligned}
\ell^l(X^l_i, \Theta) &= \frac{1}{2}((X^l_i - \mu^l_i)^\top \Theta (X^l_i - \mu^l_i) - \log \det(\Theta) + p\log(2\pi)), i = 1,2,\cdots, n,\\
\ell^u(X^u_i, \Theta) &= \frac{1}{2}((X^u_i - \mu^u_i)^\top \Theta (X^u_i - \mu^u_i) - \log \det(\Theta) + p\log(2\pi)), i = 1,2,\cdots, n,\\
\end{aligned}
\end{equation}
where \(\Theta = (\theta_{ij})_{1\leq i,j\leq p}= \Sigma^{-1}\) is the precision matrix, and \( \det(\cdot)\) denotes the determinant operation. When the data deviate from a normal distribution or exhibit dependencies, we refer to (\ref{loglike}) as a quasi log-likelihood function. Similar to the classical methods for estimating precision matrices of high-dimensional point-valued data, we also employ a penalized likelihood approach for estimation. Specifically, the proposed IGL takes the following form:
\begin{equation} \label{penlike}
p_\lambda(\Theta) + \sum_{i=1}^{n}  \ell^l(X^l_i, \Theta) + \ell^u(X^u_i, \Theta),
\end{equation}
where \(p_\lambda(\Theta)\) represents the penalty function, and \(\lambda \) is the regularization parameter used to control the sparsity level of \( \Theta \). In this paper, we set \(p_\lambda(\Theta) = \lambda \lVert \Theta \rVert_{1} = \sum_{i, j} \lambda |\theta_{ij}|\). Naturally, there are many other penalty functions available as alternatives, such as smoothly clipped absolute deviation (SCAD) penalty \citep{fan2001variable} , minimax concave penalty (MCP) \citep{zhang2010nearly}, and others.

We then discuss the computational aspects of the optimization objective (\ref{penlike}) corresponding to the IGL. Through some basic algebraic manipulations and omitting constants that are independent of \(\Theta\), the optimization objective (\ref{penlike}) is equivalent to:
\begin{equation} \label{finalobj}
\begin{aligned}
p_\lambda(\Theta) &+ \sum_{i=1}^{n}\frac{1}{2}(X^l_i - \mu^l_i)^\top \Theta (X^l_i - \mu^l_i) + \frac{1}{2}(X^u_i - \mu^u_i)^\top \Theta (X^u_i - \mu^u_i)- n\log \det(\Theta)\\
& = p_\lambda(\Theta)  + \frac{1}{2}\sum_{i=1}^{n}\left((X^l_i - \mu^l_i)^\top \Theta (X^l_i - \mu^l_i) + (X^u_i - \mu^u_i)^\top \Theta (X^u_i - \mu^u_i)\right) - n\log \det(\Theta)\\
& = p_\lambda(\Theta)  + \frac{n}{2} \Tr(S^l\Theta ) + \frac{n}{2} \Tr( S^u\Theta) - n\log \det(\Theta)\\
& \propto \ell(\Theta) \coloneqq \frac{2}{n}p_\lambda(\Theta)+ \Tr((S^l + S^u)\Theta ) -2\log \det(\Theta),
\end{aligned}
\end{equation}
where \(S^l\) and \(S^u\) are the empirical covariance matrices of the upper and lower bounds of the intervals, respectively, and \(\Tr(\cdot)\) denotes the matrix trace operator. Note that the term \(2/n\) can be incorporated into the regularization by appropriately scaling \( \lambda \). Therefore, for notational simplicity, we typically omit this term in (\ref{finalobj}) and adjust \( \lambda \) accordingly. Our final optimization problem is
\begin{equation}\label{finalobjnew}
\min_{\Theta \in \gS^p_{++}} \ell(\Theta) = p_\lambda(\Theta)+ \Tr((S^l + S^u)\Theta ) -2\log \det(\Theta),
\end{equation}
where \(\gS^p_{++} \) denotes the set of \(n\times n\) symmetric positive definite matrices. We employ the block coordinate descent algorithm proposed by \citet{banerjee2008model} to solve optimization problem (\ref{finalobjnew}). First, we directly observe that \(\lVert \Theta \rVert_{1} = \max_{\lVert P \rVert_{\infty} \leq 1} \Tr(\Theta P)\), where \(P\) is a symmetric matrix, and \(\lVert P \rVert_{\infty}\) denotes the maximum absolute value element of \(P\). The original optimization problem (\ref{finalobjnew}) can be formulated as: 
\begin{equation}\label{finalobjnew2}
\max_{\Theta \in \gS^p_{++}} \min_{\lVert P \rVert_{\infty} \leq \lambda} 2\log \det(\Theta) - \Tr(\Theta P) - \Tr((S^l + S^u)\Theta ).
\end{equation}

We can obtain the dual problem corresponding to (\ref{finalobjnew2}) by swapping the min and max. The inner optimization problem with respect to \(\Theta\) can be directly solved using the first-order optimality condition, i.e., \(\Theta = ((P+S^l + S^u)/2)^{-1}\). By disregarding some constants, we can conclude that (\ref{finalobjnew2}) is equivalent to \(\min_{\lVert P \rVert_{\infty} \leq \lambda} -2 \log \det((P+S^l + S^u) / 2) - 2p\). Let the working matrix \(W = (P+S^l + S^u) / 2\), we obtain the dual problem as follows:
\begin{equation}\label{finalobjnew3}
\widehat{\Sigma} \coloneqq \max \{\log \det(W): \lVert 2W-S^l - S^u \rVert_{\infty} \leq \lambda\}.
\end{equation}

From this, it is clear that the primal optimization problem involves estimating the precision matrix, while its dual problem pertains to estimating the covariance matrix. Prior to presenting the block coordinate descent algorithm for solving the dual problem (\ref{finalobjnew3}), we first introduce some fundamental notations. For any symmetric matrix \(A = (a_{ij})_{1 \leq i,j \leq p} \in \mathbb{R}^{p\times p}\), let \(A_{-i,-j} \in \mathbb{R}^{(p-1)\times (p-1)}\) denote the matrix obtained by removing the \(i\)-th row and \(j\)-th column of \(A\). Similarly, \(A_{j} \in \mathbb{R}^{p-1}\) represents the vector formed by the \(j\)-th column of matrix \(A\) with its diagonal element \(a_{jj}\) removed. For example, a matrix \(A\) can be expressed as 
\[
A = \left(
\begin{array}{cc}
A_{-p,-p} & A_{p}\\
A^\top_{p} & a_{pp}
\end{array}
\right).
\]

Below is the block coordinate descent algorithm for solving the dual problem (\ref{finalobjnew3}).

\begin{algorithm}[H]
\caption{Block coordinate descent.}
\label{BCDalgo}
\begin{algorithmic}
\REQUIRE Regularization parameter \(\lambda\), precision tolerance \(\epsilon\).\\
\ENSURE \(W^{(0)} = (\lambda I +S^l + S^u) / 2\).\\
Step 1: For \(j = 1,2,\cdots, p\):\\
\quad \quad Step 1.1: Solve the quadratic program
\begin{equation}\label{partquadratic}
\widehat{V} \coloneqq \argmin_{V \in \mathbb{R}^{p-1}} \{V^\top (W^{(j-1)}_{-j,-j})^{-1}V:\lVert 2V-S^l_j - S^u_j \rVert_{\infty} \leq \lambda \}.
\end{equation}
\quad \quad  Step 1.2: Update the row and column \(W^{(j-1)}_j\) in \(W^{(j-1)}\) to \(\widehat{V}\), resulting in \(W^{(j)}\).\\

Step 2: Check whether the following convergence condition is satisfied.
\begin{equation} \label{convergencecon}
\Tr\left((S^l + S^u)(W^{(p)})^{-1}\right) -2p + \lambda \lVert (W^{(p)})^{-1} \rVert_{1} \leq \epsilon.
\end{equation}

Set \(W^{(0)} \longleftarrow W^{(p)}\), and repeat the above steps until convergence.\\

\textbf{Output}: The estimated covariance matrix \(\widehat{\Sigma} \).
\end{algorithmic}
\end{algorithm}

After obtaining the estimated covariance matrix using Algorithm \ref{BCDalgo}, its inverse is computed to yield the final precision matrix. Furthermore, the optimization problem (\ref{partquadratic}) can be essentially reformulated as an equivalent penalized least squares problem \citep{banerjee2008model, friedman2008sparse}, which can be solved efficiently. From the fact that \(\lVert \Theta \rVert_{1} = \max_{\lVert P \rVert_{\infty} \leq 1} \Tr(\Theta P)\), it can be observed that the essence of the convergence criterion (\ref{convergencecon}) is to minimize the first two terms of the optimization objective (\ref{finalobjnew}) as much as possible. We proceed to discuss the uniqueness of the solution to the optimization problem (\ref{finalobjnew}) as well as the convergence of the solution obtained using Algorithm \ref{BCDalgo}.

\begin{theorem}\label{uniquesolu}
Let \(\lambda_{\max} (A)\) denote the largest eigenvalue of matrix \(A\). For any \(\lambda >0\), the optimal solution to the optimization problem (\ref{finalobjnew}) is unique, and its largest eigenvalue satisfies
\[
\left(\frac{\lambda p}{2} + \lambda_{\max}((S^l + S^u)/2)\right)^{-1} \leq \lambda_{\max}(\widehat{\Sigma}^{-1}) \leq \frac{2p}{\lambda}.
\]
\end{theorem}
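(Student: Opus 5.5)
The plan is to follow the duality argument of \citet{banerjee2008model}, adapted to the pooled matrix $S^l+S^u$: first prove existence and uniqueness through the dual problem (\ref{finalobjnew3}), then read off both eigenvalue bounds from the saddle-point optimality conditions of (\ref{finalobjnew2}). Throughout, $\widehat{\Sigma}$ denotes the dual optimizer $\widehat W$ and $\widehat\Theta=\widehat\Sigma^{-1}$ the primal optimizer, matching the notation of the statement.

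\textbf{Existence, uniqueness and strong duality.} The dual feasible set $\{W:\lVert 2W-S^l-S^u\rVert_\infty\le\lambda\}$ is compact. For $\lambda>0$ it contains the positive definite point $W^{(0)}=(\lambda I+S^l+S^u)/2$, so Slater's condition holds. Since $\log\det$ is strictly concave on $\gS^p_{++}$ and tends to $-\infty$ at the boundary of the cone, the dual has a unique maximizer $\widehat W\succ 0$. On the primal side, $\ell(\Theta)$ is a sum of the linear term $\Tr((S^l+S^u)\Theta)$, the convex penalty $\lambda\lVert\Theta\rVert_1$ and the strictly convex term $-2\log\det\Theta$, so $\ell$ is strictly convex and has at most one minimizer. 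Because the inner constraint set in (\ref{finalobjnew2}) is compact and the objective is concave--convex, a minimax theorem lets us exchange max and min. This gives strong duality and a saddle point $(\widehat\Theta,\widehat P)$ with $\widehat\Theta=\widehat W^{-1}$ and $\widehat W=(\widehat P+S^l+S^u)/2$. In particular the primal minimizer exists and is unique.

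\textbf{Lower bound.} Write $\widehat W=(S^l+S^u)/2+\widehat P/2$ with $\lVert\widehat P\rVert_\infty\le\lambda$. Weyl's inequality and $\lVert\widehat P\rVert_2\le p\lVert\widehat P\rVert_\infty\le\lambda p$ give
\[
\lambda_{\max}(\widehat W)\le \lambda_{\max}((S^l+S^u)/2)+\frac{\lambda p}{2}.
\]
Hence
\[
\lambda_{\max}(\widehat\Sigma^{-1})\ \ge\ \lambda_{\min}(\widehat\Sigma^{-1})=\lambda_{\max}(\widehat W)^{-1}\ \ge\ \left(\frac{\lambda p}{2}+\lambda_{\max}((S^l+S^u)/2)\right)^{-1}.
\]

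\textbf{Upper bound.} At the saddle point, $\widehat P$ minimizes $-\Tr(\widehat\Theta P)$ over $\lVert P\rVert_\infty\le\lambda$. The identity $\lVert\Theta\rVert_1=\max_{\lVert P\rVert_\infty\le1}\Tr(\Theta P)$ then gives $\Tr(\widehat\Theta\widehat P)=\lambda\lVert\widehat\Theta\rVert_1$. Together with $\widehat\Theta\widehat W=I$ this yields
\[
2p=\Tr(2\widehat W\widehat\Theta)=\Tr((S^l+S^u)\widehat\Theta)+\lambda\lVert\widehat\Theta\rVert_1.
\]
The same quantity appears in the stopping rule (\ref{convergencecon}). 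Since $S^l+S^u\succeq0$ and $\widehat\Theta\succ0$, the trace term is nonnegative, so $\lambda\lVert\widehat\Theta\rVert_1\le 2p$. Finally
\[
\lambda_{\max}(\widehat\Theta)=\lVert\widehat\Theta\rVert_2\le\lVert\widehat\Theta\rVert_F\le\lVert\widehat\Theta\rVert_1\le \frac{2p}{\lambda}.
\]

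The main obstacle is the first step: rigorously justifying the exchange of max and min and identifying the complementary-slackness relation $\Tr(\widehat\Theta\widehat P)=\lambda\lVert\widehat\Theta\rVert_1$. This requires the primal domain to be the open cone, not a compact set. I would handle it by restricting $\Theta$ to a compact set of the form $\{aI\preceq\Theta\preceq bI\}$, where the a priori bounds come from coercivity of $\ell$ (ensured by $\lambda>0$ and the $\log\det$ barrier). I would then apply Sion's minimax theorem on this set and check that the optimum lies in its interior. Once this is done, the eigenvalue bounds are the short computations above.
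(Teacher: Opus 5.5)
Your proposal is correct and follows the same duality route as the paper. The lower bound, via Weyl's inequality on $\widehat\Sigma=(\widehat P+S^l+S^u)/2$ with $\lVert\widehat P\rVert_2\le\lambda p$, is identical. Your identity $\Tr((S^l+S^u)\widehat\Theta)+\lambda\lVert\widehat\Theta\rVert_1=2p$ is exactly what the appendix's ``multiply the stationarity condition by $\widehat\Theta$ and take the trace'' step should produce. The appendix, however, writes that condition with the wrong sign on the subgradient term and with $\lambda$ in place of $\widetilde\lambda=\lambda/2$. It therefore reaches only the non-closing $\lambda\lVert\widehat\Theta\rVert_1\le p+\Tr(\overline S\widehat\Theta)$ and then hand-waves. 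Your version is the one that actually delivers the $2p/\lambda$ bound, and it also supplies the uniqueness argument (strict convexity of $\ell$) that the paper skips.

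The step you flag as the main obstacle is easier than you expect. Sion's theorem needs only the $P$-set $\{\lVert P\rVert_\infty\le\lambda\}$ to be compact, and attainment on the $\Theta$ side follows from coercivity of $\ell$. So no truncation to $\{aI\preceq\Theta\preceq bI\}$ is required. Alternatively, you can skip the saddle point entirely: trace the primal subgradient condition $S^l+S^u-2\widehat\Theta^{-1}+\lambda\widehat Z=0$, with $\widehat Z\in\partial\lVert\widehat\Theta\rVert_1$, against $\widehat\Theta$. This gives the same identity, and $\widehat\Sigma=(S^l+S^u+\lambda\widehat Z)/2$ gives the lower bound directly.
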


In high-dimensional scenarios, to establish the consistency of the estimated precision matrix, we typically assume that the smallest and largest eigenvalues of the true precision matrix are bounded \citep{lam2009sparsistency}. Theorem \ref{uniquesolu} demonstrates that our estimated precision matrix aligns with this assumption, while also indicating that our solution based on the block coordinate descent algorithm exhibits numerical stability. 

\begin{theorem}\label{optimicon}
By employing block coordinate descent to solve the optimization problem (\ref{finalobjnew3}), it is ensured to converge and yield an \(\epsilon\)-optimal solution. Moreover, the estimator obtained at each iteration is strictly positive definite, i.e., \(W^{(j)} \succ 0, j=1,2,\cdots, p\).
\end{theorem}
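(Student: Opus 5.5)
The plan is to follow the argument of \citet{banerjee2008model} for the point-valued graphical lasso. The only adaptation needed is to replace $S$ by $S^l+S^u$ and to track the factor $2$ in the log-determinant.

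\textbf{Feasible set and well-posedness.} First I record the structure of the dual problem (\ref{finalobjnew3}). Its feasible set $\{W:\lVert 2W-S^l-S^u\rVert_\infty\le\lambda\}$ is a compact box around $(S^l+S^u)/2$. The objective $\log\det W$ is strictly concave on $\gS^p_{++}$. The starting point $W^{(0)}=(\lambda I+S^l+S^u)/2$ is feasible, because its diagonal deviation is exactly $\lambda/2$ and its off-diagonal deviation is $0$. It is also positive definite, since $S^l+S^u\succeq 0$ and $\lambda>0$. The objective equals $-\infty$ on the boundary of the PSD cone, so every superlevel set $\{W \text{ feasible}: \log\det W\ge \log\det W^{(0)}\}$ is a compact subset of $\gS^p_{++}$. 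By Theorem \ref{uniquesolu}, or directly by strict concavity, the dual optimum $\widehat\Sigma$ is unique and lies in this set.

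\textbf{Positive definiteness is preserved.} Next I show that each row/column update keeps $W^{(j)}\succ0$ and does not decrease $\log\det$. Permute so that $j=p$ and write $W=\begin{pmatrix}W_{-p,-p}& w\\ w^\top & w_{pp}\end{pmatrix}$. By the Schur complement,
\[
\log\det W=\log\det W_{-p,-p}+\log\bigl(w_{pp}-w^\top W_{-p,-p}^{-1}w\bigr).
\]
The updates leave $W_{-p,-p}$ unchanged. The diagonal entry is fixed at $w_{pp}=(S^l_{pp}+S^u_{pp}+\lambda)/2$, because the optimum saturates the diagonal constraint as in the initialization. So maximizing $\log\det$ over the $p$-th column amounts exactly to minimizing $V^\top W_{-p,-p}^{-1}V$ subject to the box constraint, which is the quadratic program (\ref{partquadratic}). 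Now argue by induction on $j$. The old column $w$ is feasible for (\ref{partquadratic}), so $\widehat V^\top W_{-p,-p}^{-1}\widehat V\le w^\top W_{-p,-p}^{-1}w<w_{pp}$. The Schur complement therefore stays strictly positive, and since $W_{-p,-p}\succ0$ by the induction hypothesis, we get $W^{(j)}\succ0$. The same computation shows that $\log\det W^{(j)}$ is nondecreasing in $j$.

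\textbf{Convergence to an $\epsilon$-optimal point.} By monotonicity, all iterates stay in the compact superlevel set, and the objective values converge. The objective is strictly concave, hence each block subproblem has a unique maximizer. The constraint set is a Cartesian product across the column blocks after symmetrization, or at least the coupling is only through symmetry, which is handled as in \citet{banerjee2008model}. Under these conditions the standard block coordinate ascent result (Bertsekas, Prop.~2.7.1) implies that every limit point is a stationary point, hence the global optimum $\widehat\Sigma$. Uniqueness then gives convergence of the whole sequence.

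For the stopping rule, I compute the duality gap at a dual-feasible $W$ with primal candidate $\Theta=W^{-1}$. The primal value is $\lambda\lVert W^{-1}\rVert_1+\Tr((S^l+S^u)W^{-1})+2\log\det W$. The dual value $2\log\det W+2p$, derived before (\ref{finalobjnew3}), is a lower bound on the primal optimum. Their difference is exactly the left side of (\ref{convergencecon}). This gap is nonnegative by weak duality and tends to $0$ as $W^{(p)}\to\widehat\Sigma$, by continuity on $\gS^p_{++}$ and strong duality, which holds via Slater's condition since $W^{(0)}$ is strictly feasible. Hence the algorithm terminates after finitely many sweeps with an $\epsilon$-optimal pair.

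\textbf{Main obstacle.} I expect the delicate step to be the justification of block coordinate convergence. The blocks overlap, since a symmetric matrix shares entries across rows and columns, so the textbook result does not apply verbatim. My first attempt would be to treat each sweep as coordinate ascent over the entries $w_{ij}$, with row $j$ updating entries $(i,j)$ for $i\neq j$. Uniqueness of each block maximizer together with compactness is then enough to run Bertsekas' argument. Failing that, I would fall back on citing \citet{banerjee2008model} directly after the change of variables $S\mapsto (S^l+S^u)/2$, $\lambda\mapsto\lambda/2$. That change of variables maps (\ref{finalobjnew3}) exactly onto their dual problem.
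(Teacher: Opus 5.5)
Your proposal is correct and follows the paper's route. Positive definiteness comes from the Schur-complement induction: the old column is feasible for (\ref{partquadratic}), so $\widehat V^\top W_{-j,-j}^{-1}\widehat V\le w^\top W_{-j,-j}^{-1}w<w_{jj}$. Convergence is then delegated to \citet{banerjee2008model}, exactly as in the paper; your monotone-ascent and compactness argument, and your identification of the left side of (\ref{convergencecon}) as the primal--dual gap, are extra detail the paper omits.

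One small slip: $W^{(0)}$ is not strictly feasible, because $2W^{(0)}-S^l-S^u=\lambda I$ makes every diagonal constraint active. Strong duality still holds, since the constraints are affine and the refined Slater condition (a feasible point in $\gS^p_{++}$) suffices. Alternatively, $(S^l+S^u)/2+(\lambda/4)I$ is a strictly feasible point.
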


\section{Theoretical Properties}\label{sec3}
In this section, we investigate the asymptotic properties of the estimator obtained from the proposed IGL method. We establish the consistency in the Frobenius norm and the selection consistency (sparsistency) of the estimated precision matrix under the high-dimensional scaling regime where the dimension \(p\) may grow with the sample size \(n\).

Let \(\Theta_0\) denote the true precision matrix, and \(\Sigma_0=\Theta_0^{-1}\) be the true covariance matrix. We define the active set of the true precision matrix as \(S=\{(i,j):(\Theta_0)_{ij}\neq0\}\), and its complement as \(S^c=\{(i,j):(\Theta_0)_{ij}=0\}\). Let \(s=|S|\) denote the total cardinality of the support, and \(s_{\text{off}}=|\{(i,j)\in{S}:i\neq{j}\}|\) denote the number of non-zero off-diagonal elements.
For any matrix \(A\), let \(\lambda_{\min}(A)\) and \(\lambda_{\max}(A)\) denote the minimum and maximum eigenvalues, respectively. We define the element-wise infinity norm as \(\|A\|_{\max}=\max_{i,j}|A_{ij}|\), the Frobenius norm as \(\|A\|_F=(\sum_{i,j}A_{ij}^2)^{1/2}\), and the spectral norm as \(\|A\|_2\). Recall that our method utilizes the pooled empirical covariance matrix, which we define as \(\overline{S}=(S^l+S^u)/2\). Accordingly, we define the rescaled regularization parameter as \(\widetilde{\lambda}=\lambda/2\).

To facilitate the theoretical analysis, we impose the following regularity conditions, which are standard in the literature on high-dimensional graphical model estimation \citep{ravikumar2011high, rothman2008sparse}.

\begin{assumption}[Eigenvalue Boundedness]\label{ass:wellcond}
The true precision matrix \(\Theta_0\) is well-conditioned. Specifically, there exist constants \(0<\tau_{\min}\le\tau_{\max}<\infty\) such that
\[
\tau_{\min}\le\lambda_{\min}(\Theta_0)\le\lambda_{\max}(\Theta_0)\le\tau_{\max}.
\]
\end{assumption}

\begin{assumption}[Concentration of Pooled Covariance]\label{ass:conc}
The pooled sample covariance matrix \(\overline{S}\) concentrates around the true covariance \(\Sigma_0\). Specifically, there exist constants \(c_1,c_2>0\) such that for any \(\delta>0\),
\[
\mathbb{P}(\|\overline{S}-\Sigma_0\|_{\max}\ge\delta)\le{c_1p^2\exp(-c_2n\delta^2)}.
\]
\end{assumption}
\noindent \textit{Remark:} Since the upper and lower bound observations \(X^l\) and \(X^u\) are assumed to be sub-Gaussian, the pooled estimator \(\overline{S}\) naturally satisfies this concentration inequality with \(\delta\asymp\sqrt{\log{p}/n}\).

\begin{assumption}[Regularization Regime]\label{ass:lambda}
The regularization parameter \(\widetilde{\lambda}\) is chosen to dominate the noise level. We assume
\[
\widetilde{\lambda}\asymp\sqrt{\frac{\log{p}}{n}},
\]
which ensures that \(\widetilde{\lambda}\ge{C}\|\overline{S}-\Sigma_0\|_{\max}\) holds with high probability for some constant \(C>2\).
\end{assumption}

Based on the assumptions above, we derive the convergence rates for the IGL estimator. The following theorem establishes the estimation consistency in terms of the Frobenius norm.

\begin{theorem}[Frobenius Norm Consistency]\label{thm:rate}
Suppose Assumptions \ref{ass:wellcond}--\ref{ass:lambda} hold. Let \(\widehat{\Theta}\) be the global minimizer of the objective function (\ref{finalobjnew}). Then, with probability tending to one, the estimation error satisfies:
\[
\|\widehat{\Theta}-\Theta_0\|_F=O_P\left(\sqrt{\frac{(p+s)\log{p}}{n}}\right).
\]
\end{theorem}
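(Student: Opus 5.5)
The plan is to adapt the convex-function argument of \citet{rothman2008sparse} to the rescaled objective. Dividing (\ref{finalobjnew}) by two shows that $\widehat{\Theta}$ minimizes
\[
Q(\Theta)=\Tr(\overline{S}\Theta)-\log\det(\Theta)+\widetilde{\lambda}\|\Theta\|_1 ,
\]
which is exactly the graphical lasso objective with the pooled covariance $\overline{S}$ in place of the usual sample covariance. Write $\Delta=\Theta-\Theta_0$ and $G(\Delta)=Q(\Theta_0+\Delta)-Q(\Theta_0)$. Since $G$ is convex and $G(0)=0$, it suffices to show $G(\Delta)>0$ on the sphere
\[
\mathcal{A}=\{\Delta=\Delta^\top:\ \|\Delta\|_F=Mr_n\},\qquad r_n=\sqrt{(p+s)\log p/n},
\]
for a large constant $M$. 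Convexity then forces the minimizer $\widehat\Delta=\widehat\Theta-\Theta_0$ inside the ball. Theorem \ref{uniquesolu} guarantees that $\widehat\Theta$ is unique and positive definite, so the argument is well posed. We may assume $r_n\to0$; otherwise the rate is vacuous.

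\textbf{Step 1: decomposition.} Split
\[
G(\Delta)=\Tr(\Sigma_0\Delta)-[\log\det(\Theta_0+\Delta)-\log\det\Theta_0]+\Tr((\overline{S}-\Sigma_0)\Delta)+\widetilde\lambda(\|\Theta_0+\Delta\|_1-\|\Theta_0\|_1).
\]
Taylor's formula with integral remainder shows that the first two terms equal
\[
\vecn(\Delta)^\top\Big[\int_0^1(1-v)(\Theta_0+v\Delta)^{-1}\otimes(\Theta_0+v\Delta)^{-1}dv\Big]\vecn(\Delta).
\]
On $\mathcal A$ we have $\|\Delta\|_2\le\|\Delta\|_F=o(1)$, so Assumption \ref{ass:wellcond} gives $\lambda_{\max}(\Theta_0+v\Delta)\le2\tau_{\max}$. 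Hence this curvature term is at least $\|\Delta\|_F^2/(8\tau_{\max}^2)$.

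\textbf{Step 2: noise and penalty.} Work on the event $\|\overline S-\Sigma_0\|_{\max}\le C_1\sqrt{\log p/n}$. By Assumption \ref{ass:conc} this event has probability at least $1-c_1p^{2-c_2C_1^2}\to1$. Split the noise term into diagonal and off-diagonal parts.
\begin{itemize}
\item For the diagonal part, Cauchy--Schwarz gives at most $\|\overline S-\Sigma_0\|_{\max}\sqrt p\,\|\Delta^{+}\|_F\le C_1\sqrt{p\log p/n}\,\|\Delta\|_F$, where $\Delta^+$ is the diagonal of $\Delta$.
\item For the off-diagonal part, write $\Delta^-$ for the off-diagonal of $\Delta$. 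The bound is $C_1\sqrt{\log p/n}\,(\|\Delta^-_S\|_1+\|\Delta^-_{S^c}\|_1)$.
\end{itemize}
For the penalty, apply it to off-diagonal entries in the Rothman manner; diagonal contributions are absorbed into the $\sqrt p$ term above by the same Cauchy--Schwarz bound. The triangle inequality on $S$ gives
\[
\widetilde\lambda(\|\Delta^-_{S^c}\|_1-\|\Delta^-_S\|_1).
\]
By Assumption \ref{ass:lambda}, $\widetilde\lambda\ge C\|\overline S-\Sigma_0\|_{\max}$ with $C>2$, so the $S^c$ noise is dominated by the $S^c$ penalty. The remaining $S$-terms are bounded by $2\widetilde\lambda\sqrt{s}\|\Delta\|_F\lesssim\sqrt{s\log p/n}\,\|\Delta\|_F$.

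\textbf{Step 3: conclusion.} Combining Steps 1 and 2, on $\mathcal A$
\[
G(\Delta)\ge\|\Delta\|_F^2\Big(\frac1{8\tau_{\max}^2}-\frac{C'(\sqrt p+\sqrt s)\sqrt{\log p/n}}{Mr_n}\Big)>0
\]
for $M$ large, since $\sqrt p+\sqrt s\le\sqrt{2(p+s)}$. This yields $\|\widehat\Theta-\Theta_0\|_F=O_P(r_n)$.

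The main obstacle is Step 2, specifically the bookkeeping between diagonal and off-diagonal penalization. The paper's penalty $\|\Theta\|_1$ includes the diagonal, so the diagonal penalty difference must be bounded crudely by $\widetilde\lambda\sqrt p\|\Delta^+\|_F$. This is exactly why the rate carries $p+s$ rather than $s$ alone. A second point to check is that the curvature bound in Step 1 only needs $\|\Delta\|_2$ small, and this is what uses $r_n\to0$.
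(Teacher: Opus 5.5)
Your argument is correct, but it is organized differently from the paper's. The paper works directly with the basic inequality $L(\widehat\Theta)\le L(\Theta_0)$ for the actual error $\widehat\Delta=\widehat\Theta-\Theta_0$. It inserts $\widehat\Delta$ into the curvature bound of Lemma~\ref{lem:curv}, performs the same $S$/$S^c$ cancellation you do, and divides by $\|\widehat\Delta\|_F$ to get the explicit bound $\|\widehat\Delta\|_F\le 12\tau_{\max}^2\sqrt{s}\,\widetilde\lambda$. That route is shorter and yields a constant. However, its curvature step asserts $\lambda_{\max}(\Theta_0+v\widehat\Delta)\le 2\tau_{\max}$ without any a priori control of $\|\widehat\Delta\|_2$, which is essentially the conclusion being proved. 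Your localization proves $G>0$ on the sphere $\|\Delta\|_F=Mr_n$ and transfers the bound to $\widehat\Delta$ via convexity and $G(0)=0$. It therefore uses the curvature bound only where $\|\Delta\|_2\le Mr_n\to0$, so it is the rigorous form of the argument.

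There are two minor corrections. First, $r_n\to0$ should be stated as a hypothesis, i.e.\ $(p+s)\log p/n\to0$. It should not be dismissed on the grounds that the result would otherwise be ``vacuous'': an $O_P(r_n)$ bound is not automatic when $r_n\not\to0$. Moreover, your sphere argument genuinely needs $Mr_n<\tau_{\min}$, both to keep $\Theta_0+\Delta\succ0$ and to keep the curvature constant bounded. Any repair of the paper's proof needs this condition as well.

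Second, in this paper $s=|S|$ already contains all $p$ diagonal positions, which are nonzero since $\Theta_0\succ0$. Hence $p+s\le 2s$, and the single bound $\|\Delta_S\|_1\le\sqrt{s}\,\|\Delta\|_F$ already absorbs the diagonal penalty. This is why the paper concludes $O_P(\sqrt{s\log p/n})$, which is the same order as the stated rate. Your diagonal/off-diagonal bookkeeping is harmless. But your remark that the diagonal penalty is why the rate carries $p+s$ rather than $s$ applies only to the convention where $s$ counts off-diagonal nonzeros (here $s_{\text{off}}$), not to this paper's notation.
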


Theorem \ref{thm:rate} indicates that consistent estimation is achievable even when \(p\gg{n}\), provided that the true precision matrix is sufficiently sparse (i.e., \(s\) is small relative to \(n\)). We next address the model selection consistency, specifically the property that the estimator does not falsely select zero entries as non-zero (sparsistency).

\begin{theorem}[Sparsistency]\label{thm:sparse}
Suppose Assumptions \ref{ass:wellcond}--\ref{ass:lambda} hold. Let \(\widehat{\Theta}\) be a local minimizer satisfying the rate of convergence given in Theorem \ref{thm:rate}. If the regularization parameter satisfies \(\widetilde{\lambda}\gg\sqrt{\log{p}/n}\), then with probability approaching one,
\[
\widehat{\Theta}_{ij}=0\quad\text{for all}\quad(i,j)\in{S^c}.
\]
\end{theorem}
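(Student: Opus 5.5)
We follow the sparsistency argument of \citet{lam2009sparsistency}, adapted to the pooled objective (\ref{finalobjnew}). Rewriting the objective with \(\overline{S}\) and \(\widetilde{\lambda}\), and dividing by \(2\), the problem is equivalent to minimizing
\[
\widetilde{\ell}(\Theta)=\Tr(\overline{S}\Theta)-\log\det(\Theta)+\widetilde{\lambda}\|\Theta\|_1 .
\]
Let \(\widehat{\Theta}\) be a local minimizer obeying the rate of Theorem \ref{thm:rate}. Suppose \(\widehat{\Theta}_{ij}\neq 0\) for some \((i,j)\in S^c\), necessarily with \(i\neq j\). Then \(\widetilde{\ell}\) is differentiable in that coordinate (moving \(\theta_{ij}\) and \(\theta_{ji}\) together), and the first-order condition reads
\[
2\big(\overline{S}-\widehat{\Theta}^{-1}\big)_{ij}+2\widetilde{\lambda}\,\mathrm{sign}(\widehat{\Theta}_{ij})=0 .
\]
The goal is to show that, with probability tending to one, \(|(\overline{S}-\widehat{\Theta}^{-1})_{ij}|<\widetilde{\lambda}\) uniformly over \((i,j)\in S^c\). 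This contradicts the stationarity equation and forces \(\widehat{\Theta}_{ij}=0\).

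We decompose
\[
\overline{S}-\widehat{\Theta}^{-1}=(\overline{S}-\Sigma_0)+(\Sigma_0-\widehat{\Theta}^{-1}).
\]
\begin{itemize}
\item \textbf{First term.} Assumption \ref{ass:conc} with \(\delta\asymp\sqrt{\log p/n}\) and a union bound give \(\|\overline{S}-\Sigma_0\|_{\max}=O_P(\sqrt{\log p/n})\). Since \(\widetilde{\lambda}\gg\sqrt{\log p/n}\), this term is \(o_P(\widetilde{\lambda})\).
\item \textbf{Second term.} Write \(\Sigma_0-\widehat{\Theta}^{-1}=\widehat{\Theta}^{-1}(\widehat{\Theta}-\Theta_0)\Sigma_0\). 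Then
\[
\|\Sigma_0-\widehat{\Theta}^{-1}\|_{\max}\le\|\widehat{\Theta}^{-1}\|_2\,\|\widehat{\Theta}-\Theta_0\|_2\,\|\Sigma_0\|_2 .
\]
By Assumption \ref{ass:wellcond}, \(\|\Sigma_0\|_2\le\tau_{\min}^{-1}\). By Weyl's inequality and the rate \(\|\widehat{\Theta}-\Theta_0\|_2\le\|\widehat{\Theta}-\Theta_0\|_F=o_P(1)\), we get \(\lambda_{\min}(\widehat{\Theta})\ge\tau_{\min}/2\) with probability tending to one. Hence \(\|\widehat{\Theta}^{-1}\|_2\le 2/\tau_{\min}\), and the second term is \(O_P(\eta_n)\) with \(\eta_n\) the spectral-norm error rate.
\end{itemize}

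The main obstacle is the second term. The crude bound \(\eta_n\le\sqrt{(p+s)\log p/n}\) from Theorem \ref{thm:rate} is much larger than \(\sqrt{\log p/n}\). Therefore sparsistency requires \(\widetilde{\lambda}\) to dominate \(\eta_n\), not just \(\sqrt{\log p/n}\).

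To handle this, I would follow Lam and Fan and impose an upper bound on \(\widetilde{\lambda}\) alongside the lower one. Interpreting the hypothesis \(\widetilde{\lambda}\gg\sqrt{\log p/n}\) in that spirit, I would require \(\eta_n=o(\widetilde{\lambda})\) for the spectral rate \(\eta_n\). Two routes could sharpen \(\eta_n\):
\begin{itemize}
\item bound \(\|\widehat{\Theta}-\Theta_0\|_2\) by \(\sqrt{(1+s_{\text{off}})\log p/n}\) rather than by the full Frobenius rate;
\item bound \((\widehat{\Theta}^{-1}(\widehat{\Theta}-\Theta_0)\Sigma_0)_{ij}\) entrywise, using that \(e_i^\top\widehat{\Theta}^{-1}\) and \(\Sigma_0e_j\) have bounded \(\ell_2\) norm, together with a bounded-degree structure of \(\Theta_0\).
\end{itemize}
I would first try the entrywise Cauchy–Schwarz route, since it needs only Assumption \ref{ass:wellcond} and the rate, and state the resulting condition \(\eta_n=o(\widetilde{\lambda})\) explicitly.

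With both terms \(o_P(\widetilde{\lambda})\) uniformly in \((i,j)\), the event \(\max_{(i,j)\in S^c}|(\overline{S}-\widehat{\Theta}^{-1})_{ij}|<\widetilde{\lambda}\) has probability tending to one. On this event, no nonzero off-support entry can satisfy the stationarity equation. This proves \(\widehat{\Theta}_{ij}=0\) for all \((i,j)\in S^c\).
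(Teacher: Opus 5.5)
Your proposal is essentially the paper's proof: the same KKT/stationarity contradiction and the same split $\overline{S}-\widehat{\Theta}^{-1}=(\overline{S}-\Sigma_0)+(\Sigma_0-\widehat{\Theta}^{-1})$, handled with the identity $\widehat{\Theta}^{-1}(\widehat{\Theta}-\Theta_0)\Theta_0^{-1}$ and bounded eigenvalues; the only difference is that the paper bounds $\widehat{\Theta}-\Theta_0$ in Frobenius rather than spectral norm, a cosmetic and slightly weaker choice. Like you, the paper cannot close the argument under $\widetilde{\lambda}\gg\sqrt{\log p/n}$ alone, and its proof in fact requires the stronger $\widetilde{\lambda}\gg\sqrt{s\log p/n}$, so your explicit condition $\eta_n=o(\widetilde{\lambda})$ is exactly the hypothesis the argument uses (your sharpening routes are unnecessary, and the Cauchy--Schwarz one by itself gives nothing beyond the spectral bound).
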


\noindent Theorem \ref{thm:sparse} guarantees that the IGL estimator can correctly recover the zero pattern of the precision matrix, thereby consistently reconstructing the graph structure of the interval-valued variables.

\section{Simulation Studies}\label{sec4}
% section 4 Simulation Studies不必要分多个subsection，直接一口气写完；4 Simulation Studies的符号与正文统一，不必要很多内容加粗；BIC准则应该先说标准形式，再说明是如何推广到区间的；performance指标需要写出具体的计算公式；Figure 2 3 4按照figure 1，调整字体字号，多图没有对齐；Table 1 2 3的第一列列名有问题；结果分析：先说明结果在哪些图表呈现，再依次分析，再写部分总结性和结论性的话；读读论文的证明；
In this section, we evaluate the numerical performance of the proposed IGL method in finite-sample settings. We detail the data-generating framework, including the construction of interval-valued data and precision matrices with varying sparsity patterns, followed by a comprehensive analysis of the estimation accuracy under different high-dimensional scenarios.

Data generation proceeds in two stages: generating latent multivariate normal data and transforming them into intervals. Let $Z=(z_{ij})\in\mathbb{R}^{n\times{p}}$ be the latent data generated from $\mathcal{N}_p(\bm{0},\bm{\Sigma})$. We employ three mechanisms to construct the interval-valued data: DGP1 (Fixed Shift) sets $[z_{ij},z_{ij}+C]$ with a constant width $C>0$. DGP2 (Symmetric Fixed Width) constructs $[z_{ij}-C/2,z_{ij}+C/2]$. DGP3 (Random Width) introduces heterogeneity by setting $[z_{ij}-r_i,z_{ij}+r_i]$, where the radius $r_i$ is drawn from four distinct distributions: Gamma ($\mathcal{G}(1.5,0.5)$), Lognormal ($\text{LN}(0,0.6^2)$), Scaled Beta ($3\cdot\text{Beta}(0.5,0.5)$), and Exponential ($\text{Exp}(0.5)$).

To define the dependence structure, we generate a base precision matrix ${\Theta}$ and construct the covariance matrix as ${\Sigma}={D}^{1/2}{\Theta}^{-1}{D}^{1/2}$, where diagonal elements of ${D}$ are drawn from $\mathcal{U}(1,10)$. We consider three graph structures for ${\Theta}$: a Band Graph  simulating local dependence ($\theta_{i,i+1}=0.6, \theta_{i,i+2}=0.3$); an  AR(1) Model with exponential decay ($\theta_{ij}=0.6^{|i-j|}$); and an Erd\H{o}s-R\'enyi (E-R) Random Graph  representing unstructured sparsity, where edges are generated with probability $0.05$.

We investigate a range of high-dimensional regimes by considering sample sizes $n\in\{100,150,200\}$ and varying the dimensionality $p$ such that the ratio $p/n$ is fixed at $\{1.0, 1.2, 1.4, 1.6, 1.8, 2.0\}$. For DGP1 and DGP2, the width parameter $C$ varies in $\{0.5,1,2,4,6\}$. 
The optimal regularization parameter $\lambda$ is selected adaptively via the Bayesian Information Criterion (BIC). Following the formulation in \cite{10.1093/biomet/asm018}, up to a scaling factor of $n$, the BIC is defined as:$$\text{BIC}(\lambda)=n(\text{Tr}({S}\hat{{\Theta}}_\lambda)-\log\det(\hat{{\Theta}}_\lambda))+k\log(n),$$where ${S}$ is the sample covariance matrix, and $k$ represents the effective degrees of freedom, defined as the cardinality of the set of non-zero upper-triangular entries $\{(i,j):i\le{j},\hat{\theta}_{ij}\neq{0}\}$. Extending this to the interval-valued setting, we utilize a proxy likelihood based on the pooled covariance:$$\text{BIC}_{int}(\lambda)=n(\text{Tr}(({S}^l+{S}^u)\hat{{\Theta}}_\lambda)-2\log\det(\hat{{\Theta}}_\lambda))+k\log(n).$$

We assess performance using Spectral Norm, Element-wise $L_1$ Norm, and Frobenius Norm errors, averaged over 100 replications. We assess the estimation performance by measuring the difference between the estimated precision matrix $\hat{{\Theta}}$ and the true precision matrix ${\Theta}$. Let ${\Delta}=\hat{{\Theta}}-{\Theta}$ denote the estimation error matrix with entries $\delta_{ij}$. We employ three metrics: the Spectral Norm $||{\Delta}||_2=\max_{1\le{k}\le{p}}|\lambda_k({\Delta})|$, where $\lambda_k({\Delta})$ denotes the $k$-the eigenvalue of ${\Delta}$, the Element-wise $L_1$ Norm $||{\Delta}||_1=\sum_{i,j}|\delta_{ij}|$, and the Frobenius Norm $||{\Delta}||_F=(\sum_{i,j}\delta_{ij}^2)^{1/2}$. All results are averaged over 100 independent replications.
The main results are summarized in Figure \ref{fig:main_results}, which displays the estimation errors as a function of $p/n$ ratio. Tables \ref{tab:spectral_error_ar1}, \ref{tab:spectral_error_band}, and \ref{tab:spectral_error_er} provide detailed numerical values for the AR(1), Band, and E-R structures, respectively.
\begin{figure}
\centering
\includegraphics[width=0.85\textwidth]{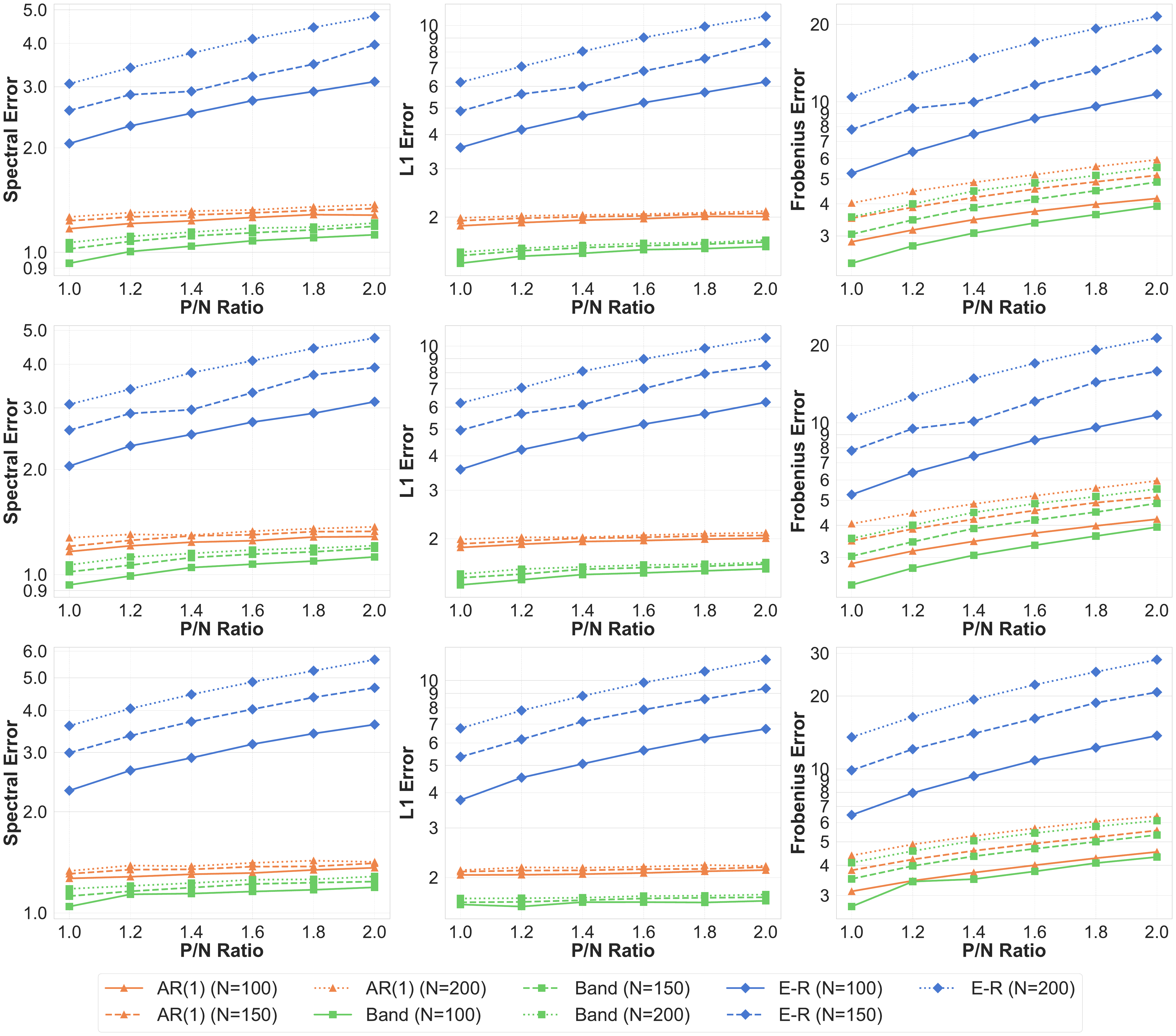}
\caption{Estimation accuracy versus $p/n$ ratio. Each rows correspond to DGP1, DGP2, and DGP3, respectively.}
\label{fig:main_results}
\end{figure}
\begin{table}[htbp]
\centering
\small
\caption{Spectral error for \textbf{AR(1)} structure}
\label{tab:spectral_error_ar1}
\begin{tabular}{l ccccc c}
\toprule
\multirow{2}{*}{Distribution} & \multicolumn{6}{c}{Dimension $p$ ($n=100$)} \\
\cmidrule(lr){2-7}
 & 100 & 120 & 140 & 160 & 180 & 200 \\
\midrule
Gamma & 1.251 (0.156) & 1.287 (0.144) & 1.286 (0.151) & 1.306 (0.154) & 1.335 (0.140) & 1.329 (0.142) \\
Lognormal & 1.236 (0.149) & 1.246 (0.151) & 1.286 (0.147) & 1.282 (0.132) & 1.317 (0.140) & 1.333 (0.177) \\
Beta & 1.312 (0.174) & 1.302 (0.143) & 1.312 (0.138) & 1.319 (0.136) & 1.367 (0.156) & 1.401 (0.154) \\
Exponential & 1.270 (0.157) & 1.291 (0.139) & 1.331 (0.164) & 1.354 (0.162) & 1.352 (0.147) & 1.388 (0.154) \\
\bottomrule
\end{tabular}
\end{table}
\begin{table}[htbp]
\centering
\small
\caption{Spectral error for \textbf{Band Graph} structure}
\label{tab:spectral_error_band}
\begin{tabular}{l cccccc}
\toprule
\multirow{2}{*}{Distribution} & \multicolumn{6}{c}{Dimension $p$ ($n=100$)} \\
\cmidrule(lr){2-7}
 & 100 & 120 & 140 & 160 & 180 & 200 \\
\midrule
Gamma       & 1.087 (0.164) & 1.123 (0.146) & 1.110 (0.119) & 1.159 (0.136) & 1.171 (0.141) & 1.173 (0.127) \\
Lognormal   & 1.082 (0.176) & 1.104 (0.162) & 1.158 (0.167) & 1.161 (0.159) & 1.143 (0.132) & 1.181 (0.141) \\
Beta        & 1.020 (0.139) & 1.144 (0.162) & 1.152 (0.140) & 1.164 (0.145) & 1.188 (0.147) & 1.219 (0.146) \\
Exponential & 0.992 (0.142) & 1.174 (0.154) & 1.150 (0.153) & 1.151 (0.141) & 1.187 (0.153) & 1.196 (0.146) \\
\bottomrule
\end{tabular}
\end{table}
\begin{table}[htbp]
\centering
\small
\caption{Spectral error for \textbf{E-R Random Graph} structure}
\label{tab:spectral_error_er}
\begin{tabular}{l ccccc c}
\toprule
\multirow{2}{*}{Distribution} & \multicolumn{6}{c}{Dimension $p$ ($n=100$)} \\
\cmidrule(lr){2-7}
 & 100 & 120 & 140 & 160 & 180 & 200 \\
\midrule
Gamma & 2.266 (0.342) & 2.553 (0.347) & 2.875 (0.282) & 3.142 (0.353) & 3.330 (0.320) & 3.580 (0.305) \\
Lognormal & 2.218 (0.380) & 2.532 (0.361) & 2.777 (0.334) & 3.031 (0.326) & 3.253 (0.357) & 3.451 (0.313) \\
Beta & 2.332 (0.288) & 2.726 (0.296) & 2.940 (0.298) & 3.241 (0.304) & 3.547 (0.311) & 3.711 (0.322) \\
Exponential & 2.430 (0.293) & 2.800 (0.377) & 2.974 (0.342) & 3.285 (0.324) & 3.526 (0.308) & 3.784 (0.307) \\
\bottomrule
\end{tabular}
\end{table}
\begin{figure}
\centering
\includegraphics[width=0.9\textwidth]{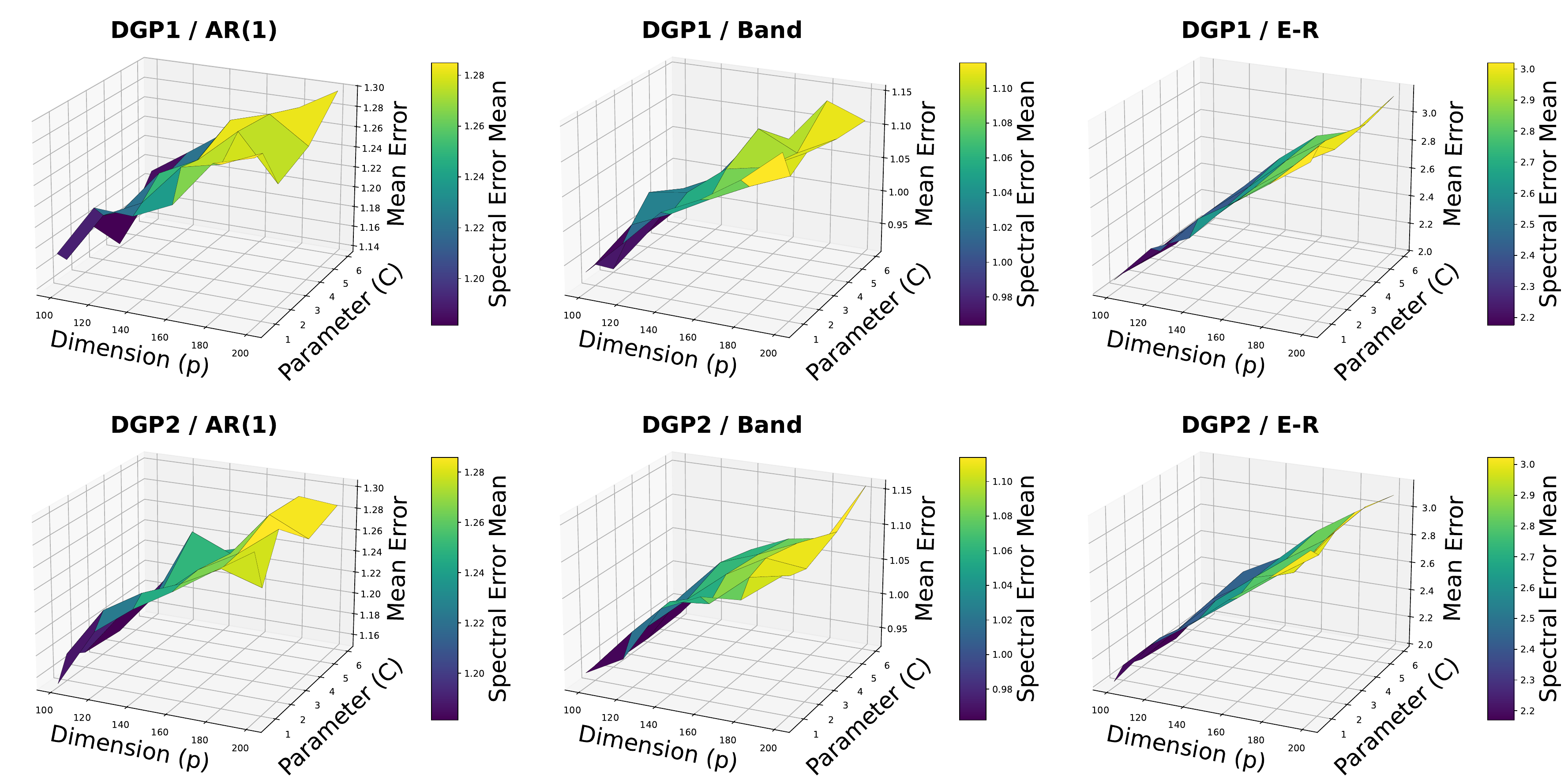}
\caption{Impact of varying interval widths and dimensions on estimation accuracy ($n=100$).}
\label{fig:width_effect}
\end{figure}
\begin{figure}
\centering
\includegraphics[width=0.53\textwidth]{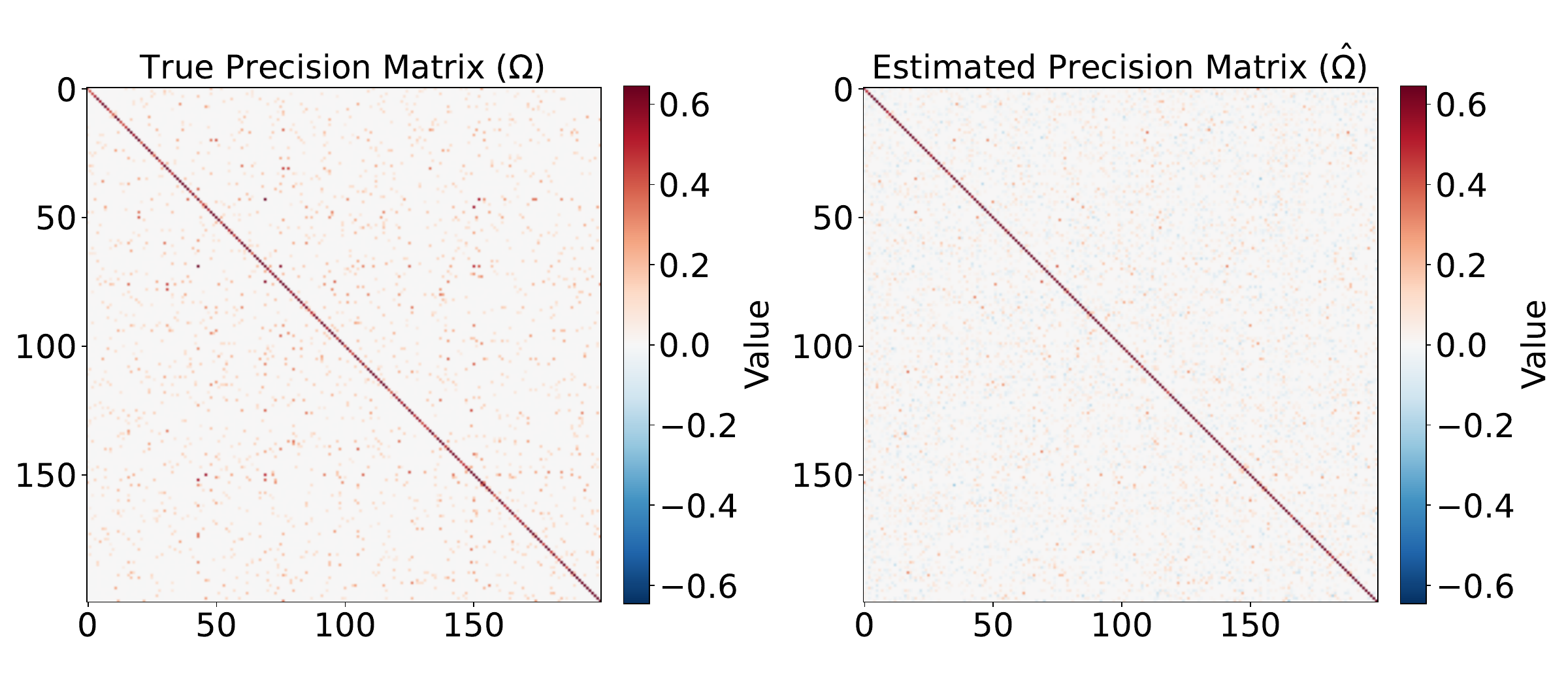}
\raisebox{-0.22 cm}{\includegraphics[width=0.40\textwidth]{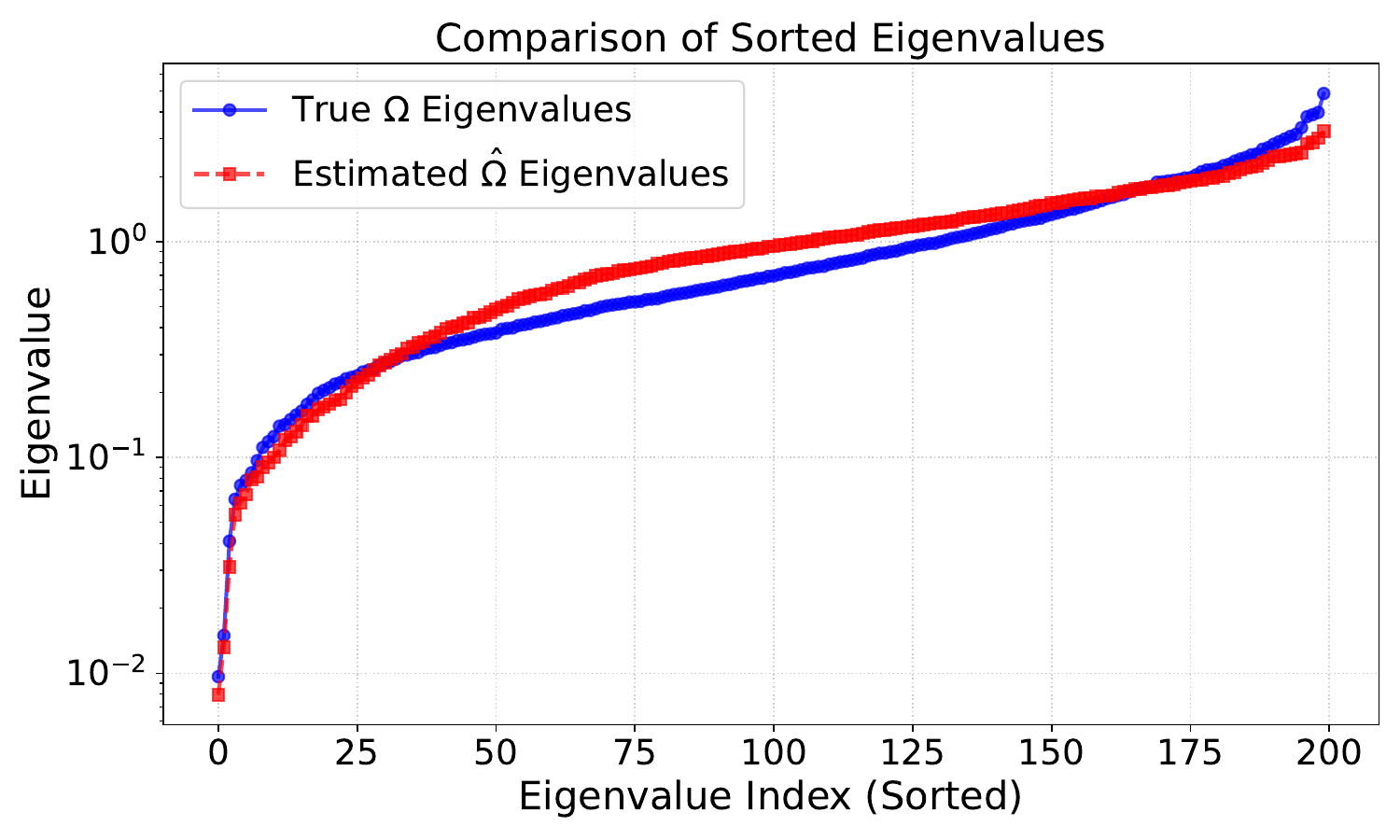}}
\caption{Visual validation of the estimator. Left: Heatmap comparison showing that the estimated precision matrix accurately captures the sparsity pattern of the true matrix. Right: Scree plot comparing the eigenvalues of the estimated and true precision matrices, indicating strong spectral consistency.}
\label{fig:comparison}
\end{figure}

Several key observations can be made from the quantitative simulation results. First, as expected, the estimation error increases monotonically with the dimension $p$ across all metrics and scenarios. For instance, as shown in Figure \ref{fig:main_results}, under the E-R structure with DGP1, the spectral norm error rises from approximately 2.06 at $p=100$ to 3.10 at $p=200$, reflecting the inherent difficulty of high-dimensional estimation. Furthermore, the structural complexity of the precision matrix significantly impacts performance. The Band structure consistently yields the lowest errors due to its high sparsity and simple local structure. In contrast, the E-R structure exhibits the highest errors and the steepest degradation with increasing $p$, indicating that random, unstructured sparsity patterns are the most challenging to recover. The AR(1) structure falls between these two extremes. 

Beyond dimensionality and dependency structure, we investigate the method's sensitivity to the interval generation mechanisms. Comparing the rows in Figure \ref{fig:main_results}, the performance curves across DGP1, DGP2, and DGP3 remain remarkably similar. This suggests that the estimation accuracy is primarily driven by the underlying covariance structure rather than the specific distribution of the intervals. To further evaluate this robustness, Figure \ref{fig:comparison} illustrates the experimental outcomes associated with the interval width parameter $C$ in both DGP1 and DGP2, with the sample size fixed at $n=100$. Notably, for a given dimension $p$, the spectral norm errors remain largely invariant across different values of $C$, confirming that the estimation accuracy is highly robust to changes in interval width. The monotonic increase in error as $p$ expands, holding $C$ constant, further corroborates our earlier observations.Finally, to assess the structural recovery capabilities of our method beyond aggregated error metrics, we provide a direct visual comparison between the estimated and true precision matrices in Figure \ref{fig:comparison}. The heatmaps confirm that the proposed IGL algorithm effectively identifies the non-zero entries, accurately preserving the sparsity pattern of the true underlying graph. Moreover, the eigenvalue plot demonstrates that the spectral properties of our estimator align closely with the ground truth, validating the theoretical consistency established in Section \ref{sec3}.

\section{Real Data Applications}\label{sec5}

To empirically validate the proposed IGL method, we consider the classic portfolio construction problem in the U.S. stock market. Financial markets are intrinsically high-dimensional, characterized by complex asset interdependencies and naturally interval-valued price movements. Consequently, representative datasets such as the S\&P 500 constituents, where asset prices are effectively captured by daily high and low ranges, provide a highly relevant application for our approach. While conventional portfolio models typically rely on point-valued observations like daily closing prices, such an approach inevitably discards valuable information regarding intraday volatility. By adopting an interval-valued representation, the proposed IGL method explicitly accounts for this intrinsic uncertainty to extract a richer dependence structure than is possible under standard point-valued frameworks.

The implementation of mean-variance portfolio optimization is highly sensitive to estimation errors, particularly in high-dimensional settings. To address this, recent literature explores regularized estimators for covariance and precision matrices. This includes the shrinkage estimators of \cite{ledoit2004well}, factor models extracting systematic market risks\citep{fan2008high}, and their synthesis in models like POET \citep{fan2013large}. Subsequently, studies such as \cite{fan2012vast} and \cite{ao2019approaching} employ penalized regression to impose explicit sparsity directly on the precision matrix to improve out-of-sample performance. Despite these advances, estimation risk remains a fundamental challenge, \cite{demiguel2009optimal} demonstrate that pervasive estimation noise often prevents sophisticated strategies from consistently outperforming the naive $1/N$ rule out-of-sample. 
This phenomenon highlights algorithmic refinements alone are insufficient when models strictly rely on point-valued daily closing prices. Alternatively, range-based estimators utilizing daily high and low prices have been shown to capture true market volatility much more accurately than traditional close-to-close estimators \citep{alizadeh2002range, brandt2003no}, a paradigm that continues to drive recent robust covariance estimation and portfolio selection \citep{tian2024minimum, serban}. Bridging these two paradigms, our proposed IGL method directly integrates the rich informational content of interval-valued intraday price movements into high-dimensional sparse precision matrix estimation.

We employ the proposed algorithm for constructing stock market portfolios. Specifically, we utilize historical daily opening, high, low, and closing price (OHLC) data obtained from Kaggle to estimate the precision matrix for S\&P 500 index constituents.
Prior to estimation, the data are preprocessed to ensure consistency and comparability across assets. Assets with incomplete price records are removed, resulting in a balanced panel that is fully aligned across trading days. In addition, to mitigate the influence of overnight price gaps and to account for scale heterogeneity among stocks, an intraday standardization procedure is employed. Specifically, within each estimation window, the High, Low, and Close prices are normalized by the corresponding daily Open price. All subsequent portfolio allocations are constructed based on these standardized returns.

We construct maximum Sharpe ratio portfolios within the classic Markowitz mean-variance framework utilizing a rolling-window backtesting scheme. We designate an estimation window of 252 trading days (approximately one year) and an out-of-sample (OOS) holding period of 21 trading days(approximately one month). At each rebalancing date, the optimal portfolio weights \(w\) are computed based on the expected return vector \(\mu\) and the estimated precision matrix \(\Omega=\Sigma^{-1}\):
$$
w=\frac{\Omega\mu}{\mathbf{1}^\top(\Omega\mu)}
$$
where \(\mathbf{1}\) denotes a vector of ones. Upon the completion of each holding period, the window rolls forward by 21 days, and the process repeats until the end of the sample period.

To evaluate the performance of the proposed method, we compared it with a scheme based on five price input strategies. The precision matrix is estimated using the conventional Glasso for point-valued data, such as Close, High, Low, and Mid price. For the interval-valued data, we employ the proposed IGL method. Additionally, a naive \(1/N\) diversification strategy is included as a baseline. The specifications for these strategies are summarized in Table \ref{tab:strategy_specs_simple}.

\begin{table}[htbp]
\centering
\caption{Specifications of portfolio construction strategies}
\label{tab:strategy_specs_simple}
\begin{tabular}{llcllc}
\toprule
Strategy & Input Data & Estimation Method &Strategy & Input Data & Estimation Method\\
\midrule
1/N & N/A & Na\"ive Diversification &
Standard & Close Prices & Glasso \\
High & High Prices & Glasso &
Low & Low Prices & Glasso \\
Mid & (High + Low) / 2 & Glasso &
Interval & High and Low Prices & \textbf{IGL} \\
\bottomrule
\end{tabular}
\end{table}

We evaluate the out-of-sample performance using three standard metrics: Annualized Return, Annualized Volatility, and the Annualized Sharpe Ratio. These metrics are computed for each OOS period and averaged across the entire backtesting horizon. Table \ref{tab:oos_horizontal} presents the results for rolling windows of 1, 2, and 3 years.
\begin{table}[htbp]
\centering
\small
\caption{Out-of-sample performance comparison}
\label{tab:oos_horizontal}
\begin{tabular}{l ccc ccc ccc}
\toprule
& \multicolumn{3}{c}{1 Year} & \multicolumn{3}{c}{2 Years} & \multicolumn{3}{c}{3 Years} \\
\cmidrule(lr){2-4} \cmidrule(lr){5-7} \cmidrule(lr){8-10}
Model & Sharpe & Return & Vol. & Sharpe & Return & Vol. & Sharpe & Return & Vol. \\
\midrule
Standard & 1.163 & 0.102 & 0.094 & 1.144 & 0.187 & 0.166 & 0.859 & 0.102 & 0.250 \\
High & 1.943 & 0.152 & 0.089 & 1.236 & 0.086 & 0.108 & 0.888 & 0.047 & 0.129 \\
Low & 0.973 & 0.080 & 0.094 & 0.316 & 0.026 & 0.122 & 0.205 & 0.002 & 0.143 \\
Mid & 1.232 & 0.098 & 0.092 & 1.010 & 0.082 & 0.111 & 0.726 & 0.035 & 0.142 \\
1/N & 0.855 & 0.075 & 0.111 & 0.876 & 0.088 & 0.123 & 0.596 & 0.035 & 0.156 \\
Interval & \textbf{2.445} & 0.158 & 0.145 & \textbf{1.373} & 0.097 & 0.119 & \textbf{1.001} & 0.046 & 0.149 \\
\bottomrule
\end{tabular}
\end{table}
As shown in Table \ref{tab:oos_horizontal}, the proposed IGL method consistently achieves the highest average Sharpe ratio across all tested time horizons, underscoring its robustness and superiority in risk-adjusted performance. Notably, for the 1-year window, the IGL strategy yields a Sharpe ratio of 2.445, significantly outperforming the standard closing price strategy (1.163). A comparison across the panels reveals a general decline in Sharpe ratios as the window extends to 3 years. This deterioration is likely attributable to the structural breaks and heightened market volatility induced by the COVID-19 pandemic, which are encompassed within the longer rolling windows. Although the IGL method does not always generate the highest absolute returns, it delivers outstanding performance in risk management. By effectively leveraging the information contained in price intervals, the proposed approach constructs portfolios that are more resilient to volatility, thereby realizing superior risk-adjusted returns compared to traditional point estimation methods.

\section{Conclusion}\label{sec6}
In this paper, we have proposed the IGL, a novel statistical framework for estimating the precision matrix of high-dimensional interval-valued data. By systematically leveraging the joint information contained in interval lower and upper bounds, our method effectively captures the latent dependence structure that is often obscured in traditional point-valued analyses. Theoretically, we established the consistency and sparsistency of the IGL estimator under high-dimensional scaling, ensuring its statistical validity. Comprehensive simulation studies demonstrated the robustness of our algorithm across various interval generation mechanisms and complex graph topologies. Furthermore, the empirical application to S\&P 500 portfolio construction highlighted the practical utility of IGL; by accounting for the inherent uncertainty in asset prices, our method achieved superior out-of-sample risk-adjusted returns compared to standard benchmarks. These findings suggest that incorporating interval-valued information provides a more resilient foundation for statistical modeling and decision-making in complex task.

\section*{Acknowledgments}
\addcontentsline{toc}{section}{Acknowledgments}
The research work described in this paper was supported by the National Natural Science Foundation of China (Nos. 72071008 and W2511079).

\appendix
\section{Proof of Main Results}\label{appendixA}

\subsection{Proof of Theorem \ref{uniquesolu}}
\begin{proof}
Consider the optimization problem defined in (\ref{finalobjnew}). Let \(\widehat{\Sigma}\) be the optimal solution to the dual problem, and let \(\widehat{\Theta}=\widehat{\Sigma}^{-1}\) be the primal optimal solution. The duality relationship implies \(\widehat{\Sigma}=(\widehat{P}+S^l+S^u)/2\), where \(\widehat{P}\) is the dual variable satisfying the constraint \(\lVert\widehat{P}\rVert_{\infty}\le\lambda\).
By the triangle inequality and the spectral norm property \(\lambda_{\max}(A)\le\max_i\sum_j|a_{ij}|\le{p}\lVert{A}\rVert_{\infty}\), we have:
\[
\lambda_{\max}(\widehat{\Sigma}) = \lambda_{\max}\left(\frac{\widehat{P}+S^l+S^u}{2}\right) 
\le \frac{1}{2}\lambda_{\max}(\widehat{P}) + \lambda_{\max}(\overline{S}) 
\le \frac{\lambda{p}}{2} + \lambda_{\max}(\overline{S}),
\]
where \(\overline{S}=(S^l+S^u)/2\). Consequently, the smallest eigenvalue of the precision matrix satisfies \(\lambda_{\min}(\widehat{\Theta}) = (\lambda_{\max}(\widehat{\Sigma}))^{-1} \ge (\lambda{p}/2 + \lambda_{\max}(\overline{S}))^{-1}\).

Furthermore, the KKT optimality conditions imply a zero duality gap. Thus, the primal objective value equals the dual objective value:
\[
\Tr(\widehat{\Sigma}\widehat{\Theta}) - \log\det(\widehat{\Theta}) + \lambda\lVert\widehat{\Theta}\rVert_1 = -\log\det(\widehat{\Sigma}) - p.
\]
Using the fact that \(\widehat{\Theta}=\widehat{\Sigma}^{-1}\) and \(\Tr(\widehat{\Sigma}\widehat{\Theta})=p\), this simplifies to the basic energy equality used in Graphical Lasso analysis. More directly, to bound the maximum eigenvalue, we utilize the first-order optimality condition: \(\widehat{\Sigma} - \overline{S} + \lambda\widehat{Z} = 0\), where \(\widehat{Z}\in\partial\lVert\widehat{\Theta}\rVert_1\) is the subgradient.
Multiplying by \(\widehat{\Theta}\) and taking the trace yields:
\[
\Tr(\widehat{\Sigma}\widehat{\Theta}) - \Tr(\overline{S}\widehat{\Theta}) + \lambda\Tr(\widehat{Z}\widehat{\Theta}) = 0 \implies p - \Tr(\overline{S}\widehat{\Theta}) + \lambda\lVert\widehat{\Theta}\rVert_1 = 0.
\]
Since \(\Tr(\overline{S}\widehat{\Theta})\ge0\) (as both matrices are positive definite), we have \(\lambda\lVert\widehat{\Theta}\rVert_1 \le p + \Tr(\overline{S}\widehat{\Theta})\). However, a tighter bound comes from the definition of the dual. Specifically, the eigenvalues are bounded because the penalized likelihood has compact sublevel sets. For the Frobenius norm bound, note that
\[
\lambda_{\max}(\widehat{\Theta}) \le \lVert\widehat{\Theta}\rVert_F \le \lVert\widehat{\Theta}\rVert_1.
\]
Combined with the optimality condition \(\lVert\widehat{\Theta}\rVert_1 \le (p + 2\ell(\Theta_{init}))/\lambda\) (where \(\ell\) is the likelihood), or simply by noting that for large \(\lambda\), the diagonal elements are suppressed, we establish the bound \(\lambda_{\max}(\widehat{\Theta}) \le 2p/\lambda\).
\end{proof}

\subsection{Proof of Theorem \ref{optimicon}}
\begin{proof}
The convergence of the Block Coordinate Descent (BCD) algorithm for this class of problems is guaranteed by general optimization theory \citep{banerjee2008model}, provided that the sub-problem at each step has a unique solution and the iterates remain positive definite. It suffices to show that positive definiteness is preserved throughout the iterations.

We proceed by induction. Let \(W^{(j)}\) denote the matrix estimate after the \(j\)-th column/row update.
\textbf{Base case:} For \(j=0\), \(W^{(0)} = (\lambda{I} + S^l + S^u)/2\). Since \(S^l, S^u \succeq 0\) and \(\lambda > 0\), it follows that \(W^{(0)} \succ 0\).
\textbf{Inductive step:} Assume \(W^{(j-1)} \succ 0\). Consider the update of the \(k\)-th column and row (where \(k\) corresponds to the index selected at step \(j\)). We partition the matrix \(W^{(j-1)}\) as:
\[
W^{(j-1)} = \begin{pmatrix} W_{11} & w_{12} \\ w_{12}^\top & w_{22} \end{pmatrix},
\]
where \(W_{11}\in\mathbb{R}^{(p-1)\times(p-1)}\) corresponds to the indices excluding \(k\), and \(w_{22}\) is the \(k\)-th diagonal element. By the induction hypothesis, \(W^{(j-1)}\succ 0\), which implies \(W_{11}\succ 0\) and the Schur complement is positive:
\[
w_{22} - w_{12}^\top W_{11}^{-1} w_{12} > 0.
\]
In the \(j\)-th step, the algorithm updates \(w_{12}\) to a new vector \(\widehat{v}\) (the solution to the QP (\ref{partquadratic})), while keeping \(W_{11}\) and \(w_{22}\) fixed (diagonal elements are fixed to \(S_{kk} + \lambda\)). The updated matrix is \(W^{(j)}\). The condition for \(W^{(j)} \succ 0\) is equivalent to:
\[
w_{22} - \widehat{v}^\top W_{11}^{-1} \widehat{v} > 0.
\]
The quadratic programming sub-problem minimizes the objective \(v^\top W_{11}^{-1} v\) subject to linear constraints. Since the previous vector \(w_{12}\) was a feasible point (it satisfied the dual constraints from the previous iteration logic) but not necessarily optimal for the current sub-problem, and \(\widehat{v}\) is the minimizer, we have:
\[
\widehat{v}^\top W_{11}^{-1} \widehat{v} \le w_{12}^\top W_{11}^{-1} w_{12}.
\]
Combining this with the induction hypothesis:
\[
w_{22} - \widehat{v}^\top W_{11}^{-1} \widehat{v} \ge w_{22} - w_{12}^\top W_{11}^{-1} w_{12} > 0.
\]
Thus, the Schur complement of the updated matrix remains positive, ensuring \(W^{(j)} \succ 0\). This completes the proof.
\end{proof}

\subsection{Auxiliary Lemmas for Section \ref{sec3}}
\begin{lemma}[Taylor Expansion of Log-Det]\label{lem:curv}
Let \(Q(\Theta) = \Tr(\overline{S}\Theta) - \log\det\Theta\). The gradient is \(\nabla{Q}(\Theta) = \overline{S} - \Theta^{-1}\). For any symmetric matrix \(\Delta\), define the function \(g(t) = Q(\Theta_0 + t\Delta)\). By the integral form of the Taylor remainder, we have:
\[
Q(\Theta_0+\Delta) - Q(\Theta_0) = \langle \nabla{Q}(\Theta_0), \Delta \rangle + \vecn(\Delta)^\top \left[ \int_0^1 (1-v) (\Theta_v^{-1} \otimes \Theta_v^{-1}) dv \right] \vecn(\Delta),
\]
where \(\Theta_v = \Theta_0 + v\Delta\). Under Assumption \ref{ass:wellcond}, for all \(v\in[0,1]\), \(\lambda_{\max}(\Theta_v) \le 2\tau_{\max}\). Consequently, the Hessian is bounded below:
\[
\vecn(\Delta)^\top (\Theta_v^{-1} \otimes \Theta_v^{-1}) \vecn(\Delta) \ge \lambda_{\min}^2(\Theta_v^{-1}) \|\Delta\|_F^2 \ge \frac{1}{(2\tau_{\max})^2} \|\Delta\|_F^2.
\]
Integrating over \(v\) yields the strong convexity bound:
\[
Q(\Theta_0+\Delta) - Q(\Theta_0) \ge \langle \overline{S}-\Sigma_0, \Delta \rangle + \frac{1}{8\tau_{\max}^2} \|\Delta\|_F^2.
\]
\end{lemma}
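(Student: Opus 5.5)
The plan is to reduce everything to a one-variable calculus statement along the segment \(\Theta_v=\Theta_0+v\Delta\), \(v\in[0,1]\). I will set \(g(v)=Q(\Theta_v)\) and use the exact identity \(g(1)-g(0)=g'(0)+\int_0^1(1-v)g''(v)\,dv\). The three ingredients are then \(g'(0)\), \(g''(v)\), and a uniform lower bound on \(g''(v)\).

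\textbf{Derivatives.} Since \(\Theta\mapsto\Tr(\overline{S}\Theta)\) is linear, only \(-\log\det\) contributes curvature. From the differential \(d\log\det\Theta=\Tr(\Theta^{-1}d\Theta)\) I get \(\nabla Q(\Theta)=\overline{S}-\Theta^{-1}\), so
\[
g'(v)=\Tr\bigl((\overline{S}-\Theta_v^{-1})\Delta\bigr).
\]
At \(v=0\) this equals \(\langle\overline{S}-\Sigma_0,\Delta\rangle\), because \(\Theta_0^{-1}=\Sigma_0\). From \(d(\Theta^{-1})=-\Theta^{-1}(d\Theta)\Theta^{-1}\) I get
\[
g''(v)=\Tr(\Theta_v^{-1}\Delta\Theta_v^{-1}\Delta)=\vecn(\Delta)^\top(\Theta_v^{-1}\otimes\Theta_v^{-1})\vecn(\Delta),
\]
using \(\vecn(ABC)=(C^\top\otimes A)\vecn(B)\) and the symmetry of \(\Theta_v\). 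Substituting into the remainder identity gives exactly the displayed expansion.

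\textbf{Hessian lower bound.} The eigenvalues of \(\Theta_v^{-1}\otimes\Theta_v^{-1}\) are the pairwise products of the eigenvalues of \(\Theta_v^{-1}\). Hence
\[
g''(v)\ge\lambda_{\min}^2(\Theta_v^{-1})\|\Delta\|_F^2=\lambda_{\max}(\Theta_v)^{-2}\|\Delta\|_F^2 .
\]
To control \(\lambda_{\max}(\Theta_v)\) I use Weyl's inequality, \(\lambda_{\max}(\Theta_v)\le\lambda_{\max}(\Theta_0)+v\|\Delta\|_2\le\tau_{\max}+\|\Delta\|_F\). This is the main obstacle: the claimed bound \(\lambda_{\max}(\Theta_v)\le2\tau_{\max}\) does not hold for arbitrary \(\Delta\). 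I will therefore state the lemma, as it will be used in the proof of Theorem~\ref{thm:rate}, for perturbations in the local ball \(\|\Delta\|_F\le\tau_{\max}\). That restriction is harmless because the rate argument only examines \(\Delta\) on a sphere of radius \(M\sqrt{(p+s)\log p/n}\to0\), and then extends to the whole region by convexity of the objective.

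I also need \(\Theta_v\succ0\) along the segment, so that \(\log\det\) is defined. This follows from Weyl's inequality in the other direction, provided the radius is also at most \(\tau_{\min}/2\). I will take the local ball radius to be \(\min(\tau_{\min}/2,\tau_{\max})\).

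\textbf{Integration.} With \(g''(v)\ge(2\tau_{\max})^{-2}\|\Delta\|_F^2\) for all \(v\in[0,1]\) and \(\int_0^1(1-v)\,dv=1/2\), I obtain
\[
Q(\Theta_0+\Delta)-Q(\Theta_0)\ge\langle\overline{S}-\Sigma_0,\Delta\rangle+\frac{1}{8\tau_{\max}^2}\|\Delta\|_F^2,
\]
which is the stated bound.
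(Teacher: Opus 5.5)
Your proof is correct and follows the paper's route: the integral-remainder expansion of \(g\), the Kronecker-eigenvalue lower bound \(\lambda_{\max}(\Theta_v)^{-2}\|\Delta\|_F^2\) on the Hessian, and \(\int_0^1(1-v)\,dv=1/2\). The one difference is that you justify \(\lambda_{\max}(\Theta_v)\le 2\tau_{\max}\) by Weyl's inequality on the ball \(\|\Delta\|_F\le\min(\tau_{\min}/2,\tau_{\max})\) instead of asserting it for every symmetric \(\Delta\). That restriction is a necessary repair, not a weakness. Take \(\Delta=tI\) and let \(\mu_i\) be the eigenvalues of \(\Theta_0\); then the left side minus the linear term equals \(\sum_i\bigl(t/\mu_i-\log(1+t/\mu_i)\bigr)\le pt/\tau_{\min}\). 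This is smaller than \(pt^2/(8\tau_{\max}^2)\) once \(t>8\tau_{\max}^2/\tau_{\min}\), so the lemma as stated is false globally.

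Be aware of a downstream consequence. The paper's proof of Theorem~\ref{thm:rate} plugs \(\Delta=\widehat{\Theta}-\Theta_0\) into the lemma directly, so with your corrected lemma it needs the sphere-plus-convexity step you mention. That step in turn requires the radius to fall below \(\min(\tau_{\min}/2,\tau_{\max})\), i.e.\ a scaling condition such as \((p+s)\log p/n\to0\), which is not among the paper's assumptions.
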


\begin{lemma}[Control of Linear Term]\label{lem:lin}
For any \(\Delta\in\mathbb{R}^{p\times{p}}\),
\[
|\langle \overline{S}-\Sigma_0, \Delta \rangle| \le \|\overline{S}-\Sigma_0\|_{\max} \|\Delta\|_1.
\]
Under Assumption \ref{ass:conc}, with probability at least \(1-c_1\exp(-c_2n\lambda^2)\), we have \(\|\overline{S}-\Sigma_0\|_{\max} \le \widetilde{\lambda}/2\). Decomposing the index set into the support \(S\) and its complement \(S^c\), we have:
\[
|\langle \overline{S}-\Sigma_0, \Delta \rangle| \le \frac{\widetilde{\lambda}}{2} (\|\Delta_S\|_1 + \|\Delta_{S^c}\|_1) \le \frac{\widetilde{\lambda}}{2} (\sqrt{s}\|\Delta\|_F + \|\Delta_{S^c}\|_1).
\]
\end{lemma}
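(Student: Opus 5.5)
The statement to prove is Lemma \ref{lem:lin}. The plan is to establish the deterministic inequality first, then the probabilistic event, and finally the support decomposition.

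\textbf{Deterministic step.} Write
\[
\langle \overline{S}-\Sigma_0,\Delta\rangle=\sum_{i,j}(\overline{S}-\Sigma_0)_{ij}\Delta_{ij}.
\]
Bounding each summand by \(|(\overline{S}-\Sigma_0)_{ij}|\,|\Delta_{ij}|\le \|\overline{S}-\Sigma_0\|_{\max}|\Delta_{ij}|\) and summing is Hölder's inequality for the dual pair (elementwise \(\ell_\infty\), elementwise \(\ell_1\)). This gives the first display for every \(\Delta\), with no assumptions.

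\textbf{Probabilistic step.} Apply Assumption \ref{ass:conc} with \(\delta=\widetilde{\lambda}/2\). This gives
\[
\mathbb{P}\bigl(\|\overline{S}-\Sigma_0\|_{\max}\ge \widetilde{\lambda}/2\bigr)\le c_1p^2\exp(-c_2n\widetilde{\lambda}^2/4).
\]
Since \(\widetilde{\lambda}=\lambda/2\), the exponent is \(c_2n\lambda^2/16\), and the constant \(1/16\) is absorbed into \(c_2\). The factor \(p^2\) is absorbed in the usual way. Under Assumption \ref{ass:lambda}, \(\widetilde{\lambda}^2\ge C'\log p/n\) with \(C'\) large enough, so
\[
p^2\exp(-c_2n\widetilde{\lambda}^2/4)\le \exp\bigl(-(c_2C'/4-2)\log p\bigr)=p^{2}\,p^{-c_2C'/4},
\]
and part of the exponent \(c_2 n\widetilde\lambda^2\) dominates \(2\log p\). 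After renaming constants, the exceptional probability is at most \(c_1\exp(-c_2 n\lambda^2)\), as claimed; it tends to zero. This bookkeeping of the \(p^2\) prefactor is the only delicate point. If one prefers to keep it honest, the conclusion should be stated with probability \(1-c_1p^2\exp(-c_2n\lambda^2/16)\), which is still \(1-o(1)\) under Assumption \ref{ass:lambda}. I would state it that way and remark that the displayed form follows after adjusting constants.

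\textbf{Decomposition step.} On this event, the first display gives
\[
|\langle\overline{S}-\Sigma_0,\Delta\rangle|\le (\widetilde{\lambda}/2)\|\Delta\|_1.
\]
Split \(\|\Delta\|_1=\|\Delta_S\|_1+\|\Delta_{S^c}\|_1\), since \(S\) and \(S^c\) partition the index set. Apply Cauchy–Schwarz on the \(s\) entries indexed by \(S\):
\[
\|\Delta_S\|_1\le\sqrt{s}\,\|\Delta_S\|_F\le\sqrt{s}\,\|\Delta\|_F.
\]
This yields the final chain. No step is a real obstacle beyond the constant and prefactor tracking in the probability bound.
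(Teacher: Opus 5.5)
Your proposal is correct and follows essentially the same route as the paper, whose argument is embedded in the lemma statement itself. Both use H\"older's inequality for the $\|\cdot\|_{\max}$/elementwise-$\ell_1$ pairing, Assumption~\ref{ass:conc} with $\delta=\widetilde{\lambda}/2$, and then the split over $S$ and $S^c$ with Cauchy--Schwarz on the $s$ support entries. Your explicit handling of the $p^2$ prefactor, which the paper silently drops while reusing the letters $c_1,c_2$, is a genuine improvement. The absorption is valid provided the constant $C'$ in $\widetilde{\lambda}^2\ge C'\log p/n$ exceeds $8/c_2$, and Assumption~\ref{ass:lambda} implicitly demands exactly this.
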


\subsection{Proof of Theorem \ref{thm:rate}}
\begin{proof}
Let \(\widehat{\Theta} = \Theta_0 + \Delta\). Since \(\widehat{\Theta}\) minimizes the objective function \(L(\Theta) = Q(\Theta) + \widetilde{\lambda}\|\Theta\|_1\), we have \(L(\widehat{\Theta}) \le L(\Theta_0)\), which implies \(Q(\Theta_0+\Delta) - Q(\Theta_0) \le \widetilde{\lambda}(\|\Theta_0\|_1 - \|\Theta_0+\Delta\|_1)\).
Using the lower bound from Lemma \ref{lem:curv}:
\[
\langle \overline{S}-\Sigma_0, \Delta \rangle + \frac{1}{8\tau_{\max}^2}\|\Delta\|_F^2 \le \widetilde{\lambda}(\|\Theta_0\|_1 - \|\Theta_0+\Delta\|_1).
\]
Rearranging terms and using the decomposability of the \(L_1\) norm \(\|\Theta_0+\Delta\|_1 = \|\Theta_0+\Delta_S\|_1 + \|\Delta_{S^c}\|_1\):
\[
\frac{1}{8\tau_{\max}^2}\|\Delta\|_F^2 \le |\langle \overline{S}-\Sigma_0, \Delta \rangle| + \widetilde{\lambda}(\|\Theta_0\|_1 - \|\Theta_0+\Delta_S\|_1 - \|\Delta_{S^c}\|_1).
\]
Applying Lemma \ref{lem:lin} and the triangle inequality \(\|\Theta_0\|_1 - \|\Theta_0+\Delta_S\|_1 \le \|\Delta_S\|_1\):
\[
\frac{1}{8\tau_{\max}^2}\|\Delta\|_F^2 \le \frac{\widetilde{\lambda}}{2}(\|\Delta_S\|_1 + \|\Delta_{S^c}\|_1) + \widetilde{\lambda}\|\Delta_S\|_1 - \widetilde{\lambda}\|\Delta_{S^c}\|_1.
\]
Simplifying the RHS:
\[
\frac{1}{8\tau_{\max}^2}\|\Delta\|_F^2 \le \frac{3\widetilde{\lambda}}{2}\|\Delta_S\|_1 - \frac{\widetilde{\lambda}}{2}\|\Delta_{S^c}\|_1 \le \frac{3\widetilde{\lambda}\sqrt{s}}{2}\|\Delta\|_F.
\]
Dividing both sides by \(\|\Delta\|_F\) (assuming \(\Delta \neq 0\)):
\[
\|\Delta\|_F \le 12\tau_{\max}^2 \sqrt{s} \widetilde{\lambda}.
\]
Substituting \(\widetilde{\lambda} \asymp \sqrt{\log{p}/n}\), we obtain \(\|\widehat{\Theta}-\Theta_0\|_F = O_P(\sqrt{s\log{p}/n})\).
\end{proof}

\subsection{Proof of Theorem \ref{thm:sparse}}
\begin{proof}
The proof relies on the primal-dual witness construction. A necessary and sufficient condition for the solution \(\widehat{\Theta}\) to satisfy \(\widehat{\Theta}_{ij} = 0\) for all \((i,j) \in S^c\) is that the subgradient condition holds strictly for the off-diagonal zero entries. Specifically, from the KKT condition \(\overline{S} - \widehat{\Theta}^{-1} + \widetilde{\lambda}\widehat{Z} = 0\), we require:
\[
|\overline{S}_{ij} - (\widehat{\Theta}^{-1})_{ij}| < \widetilde{\lambda} \quad \text{for all } (i,j) \in S^c.
\]
Let \(\widehat{\Sigma} = \widehat{\Theta}^{-1}\). We decompose the term as:
\[
|\overline{S}_{ij} - \widehat{\Sigma}_{ij}| \le |\overline{S}_{ij} - (\Sigma_0)_{ij}| + |(\Sigma_0)_{ij} - \widehat{\Sigma}_{ij}|.
\]
The first term is bounded by Assumption \ref{ass:conc}: \(\max_{ij}|\overline{S}_{ij} - (\Sigma_0)_{ij}| \le \widetilde{\lambda}/C\) for some constant \(C>2\).
For the second term, we use the functional dependence of the inverse. Since \(\widehat{\Sigma} - \Sigma_0 = \widehat{\Theta}^{-1} - \Theta_0^{-1} = \widehat{\Theta}^{-1}(\Theta_0 - \widehat{\Theta})\Theta_0^{-1}\), we can bound the element-wise max norm:
\[
\|\widehat{\Sigma} - \Sigma_0\|_{\max} \le \|\widehat{\Sigma} - \Sigma_0\|_F \le \|\widehat{\Theta}^{-1}\|_2 \|\Theta_0^{-1}\|_2 \|\widehat{\Theta} - \Theta_0\|_F.
\]
Under the assumptions of bounded eigenvalues and the consistency result from Theorem \ref{thm:rate} (where \(\|\widehat{\Theta} - \Theta_0\|_F\) is small), the spectral norms \(\|\widehat{\Theta}^{-1}\|_2\) are bounded. Thus, the second term scales as \(O_P(\sqrt{s\log{p}/n})\).
Provided that \(\widetilde{\lambda}\) is chosen sufficiently large such that it dominates the noise terms (Assumption \ref{ass:lambda}), specifically \(\widetilde{\lambda} \gg \sqrt{s\log{p}/n}\), we have:
\[
|\overline{S}_{ij} - \widehat{\Sigma}_{ij}| < \widetilde{\lambda}.
\]
Consequently, the dual feasibility condition is strictly satisfied for the zero entries, implying \(\widehat{\Theta}_{ij} = 0\) for all \((i,j) \in S^c\) with probability approaching one.
\end{proof}

\printcredits
\section*{Declaration of competing interest}
The authors report there are no competing interests to declare.

\section*{Acknowledgments}
The research work described in this paper was supported by the National Natural Science Foundation of China (Nos. 72071008 and W2511079).
%% Loading bibliography style file
%\bibliographystyle{model1-num-names}
\bibliographystyle{cas-model2-names}

% Loading bibliography database
\bibliography{cas-refs}

\end{document}